\newcommand{\MS}[1]{\ensuremath{\Delta}}
\newtheorem{theorem}{Theorem}
\newtheorem{lemma}{Lemma}[section]
\newtheorem{corollary}{Corollary}[section]
\newcommand{\bit}{\begin{myitemize}}
\newcommand{\eit}{\end{myitemize}}
\newcommand{\ben}{\begin{enum}}
\newcommand{\een}{\end{enum}}
\newcommand{\beq}[1]{\begin{equation}\label{#1}}
\newcommand{\eeq}{\end{equation}}
\newcommand{\ceil}[1]{\left \lceil #1 \right \rceil}
\newcommand{\floor}[1]{\left \lfloor #1 \right \rfloor}
\newcommand{\eps}{\varepsilon}
\renewcommand{\cal}{\mathcal}
\theoremstyle{remark}
\newtheorem*{remark}{Remark}
\newcommand{\adjacent}{\textsf{adjacent}}
\newcommand{\degree}{\textsf{degree}}
\newcommand{\neighbor}{\textsf{neighborhood}}
\newcommand{\spath}{\textsf{spath}}
\newcommand{\rank}{\textsf{rank}}
\newcommand{\select}{\textsf{select}}
\newcommand{\rmin}{\textsf{RMin}}
\newcommand{\rmax}{\textsf{RMax}}
\newcommand{\SUCC}{\textsf{SUCC}}
\title{Succinct Encodings for Families of Interval Graphs}
\begin{document}
	
	\title{Succinct Data Structures for Families of Interval Graphs}
	\author{
        H\"{u}seyin~Acan\thanks{Research supported by a National Science Foundation Fellowship (Award No. 1502650).} \\
        Drexel University\\ \href{mailto:huseyin.acan@drexel.edu}{huseyin.acan@drexel.edu}
        \\\\
        Sankardeep~Chakraborty\thanks{This work is supported by JSPS KAKENHI Grants Number 18H05291.} \\ 
        National Institute of Informatics \\ \href{mailto:sankardeep.chakraborty@gmail.com}{sankardeep.chakraborty@gmail.com}
        \\\\
        Seungbum~Jo\thanks{Work done while the author was at Universit\'e libre de Bruxelles. Research supported by Fonds de la Recherche Scientifique-FNRS under Grant no MISU F 6001 1.} \\
        Chungbuk National University\\ \href{mailto:sbjo@chungbuk.ac.kr}{sbjo@chungbuk.ac.kr}
        \\\\
        Srinivasa~Rao~Satti\\
        Seoul National University\\
    	   \href{mailto:ssrao@cse.snu.ac.kr}{ssrao@cse.snu.ac.kr}.
}
	\date{}
	\maketitle
\begin{abstract}
We consider the problem of designing succinct data structures for {\it interval graphs} with $n$ vertices while supporting degree, adjacency, neighborhood and shortest path queries in optimal time.
Towards showing succinctness, we first show that at least $n\log_2{n} - 2n\log_2\log_2 n - O(n)$ bits
are necessary to represent any unlabeled interval graph $G$ with $n$ vertices, answering an open problem of Yang and Pippenger [Proc. Amer. Math. Soc. 2017]. This is augmented by a data structure of size $n\log_2{n} +O(n)$ bits while supporting not only the above queries optimally but also capable of executing various combinatorial algorithms (like proper coloring, maximum independent set etc.) on interval graphs efficiently. Finally, we extend our ideas to other variants of interval graphs, for example, {\it proper/unit interval graphs, $k$-improper interval graphs, and circular-arc graphs}, and design succinct data structures for these graph classes as well along with supporting queries on them efficiently.
\end{abstract}
\section{Introduction}
A simple undirected graph $G$ is called an {\it interval graph} if its vertices can be assigned to intervals on the real line so that two vertices are adjacent in $G$ if and only if their assigned intervals intersect. The set of intervals assigned to the vertices of G
is called a {\it realization} of $G$. 
These graphs were first introduced by Haj\'{o}s~\cite{Hajos} who also asked for the characterization of them. The same problem was also asked, independently, by~\cite{Benser} while studying the structure of genes. Interval graphs naturally appear in a variety of contexts, for example, operations research and scheduling theory~\cite{Bar-NoyBFNS01}, biology especially in physical mapping of DNA~\cite{bio}, temporal reasoning~\cite{GolumbicS93} and many more. We refer the reader to~\cite{Golumbic,Golumbic85} for a thorough treatment of interval graphs and its applications. Eventually answering the question of Haj\'{o}s~\cite{Hajos}, several researchers came up with different characterizations of interval graphs, including linear time algorithms for recognizing them; see, for example,~\cite[Chapter 8]{Golumbic} for characterizations, and~\cite{BL} and~\cite{HMPV} for linear time algorithms. Moreover, by exploiting the special structure of interval graphs, many otherwise NP-hard problems in general graphs are also shown to have polynomial time algorithms for interval graphs~\cite{Golumbic85}. These include computing maximum independent set, reporting a proper coloring, returning a maximum clique etc. In spite of having many applications in practically motivated problems, we are not aware of any study of interval graphs from the point of view of \textit{succinct data structures}. The goal here is to store a set $Z$ of objects using the information theoretic minimum $\log(|Z|)+o(\log(|Z|))$ bits of space\footnote{throughout the paper, we use $\log$ to denote the logarithm to the base 2} along with supporting relevant set of queries efficiently, which we focus on in this paper. We assume the usual model of computation, namely a $\Theta (\log n)$-bit word RAM model where $n$ is the size of the input. 

\subsection{Related Work}

{\bf Succinct data structures.} There already exists a large body of work on representing various classes of graphs succinctly. This is partly motivated by theoretical curiosity and partly by the practical needs as these combinatorial structures do arise quite often in various applications. A partial list of such special graph classes would be trees~\cite{ChakrabortyS19,MunroR01}, planar graphs~\cite{AleardiDS08}, chordal graphs~\cite{MunroW18}, partial $k$-tree~\cite{FarzanK14} among others, while succinct encoding for arbitrary graphs is also considered in~\cite{FarzanM13}. Furthermore, such data structures for variety of other combinatorial objects are also well studied in the literature~\cite{BarbayHMS11,CGSS,MunroRRR12,Sadakane07}. We refer the interested reader to the recent book by Navarro~\cite{Navarro} for a comprehensive treatment of these and many more related topics on succinct/compact data structures.\\

\noindent
{\bf Algorithmic graph-theoretic results.}
For interval graphs, other than the algorithmic works mentioned earlier, there are plenty of attempts in exactly counting the number of unlabeled interval graphs~\cite{Hanlon,KP}, and the state-of-the-art result is due to~\cite{Pipp}, which is what we improve in this work. For the variants of the interval graphs that we study in this paper, there exists also a fairly large number of algorithmic results on them as well as structural results. For example, combinatorial problems like $3$-colourability~\cite{DBLP:journals/siammax/GareyJMP80}, maximum clique and independent set~\cite{DBLP:journals/jal/BhattacharyaK97, DBLP:journals/jal/GolumbicH88} can be solved in polynomial time for the circular-arc graph along with its recognition algorithm. See~\cite{Golumbic,Golumbic85} for more details regarding these combinatorial algorithms as well as various characterizations of these graph classes. 

\subsection{Our Results and Paper Organization}
\begin{table}[b]
	\centering
	\caption{Space lower/upper bounds of families of interval graphs.}
	\label{table:LB}
	\begin{tabular}{|c|c|c|c|c|}
		\hline
		Graph class & Space lower bound & Ref. & Space upper bound & Ref.\\ \hline
		interval & $n\log n- 2n\log\log n- O(n)$ & Thm.~\ref{main} & $n\log{n}+(2+\epsilon)n+o(n)$ & Thm.~\ref{interval:upper}\\
		proper/unit & $2n - O(\log n)$ & ~\cite{Finch} & $2n+o(n)$ & Thm.~\ref{proper}\\
		$k$-(im)proper  & open & & $ 2n\log {k}+6n+o(n\log{k})$ & Thm.~\ref{kproper:upper}\\
		circular arc & $n \log n - 2n \log\log n - O(n)$ & Thm.~\ref{main} & $n\log{n} + o(n\log{n})$ & Thm.~\ref{thm:circular}\\
		\hline
	\end{tabular}
\end{table}

\begin{table}[b]
	\centering
	\caption{Query times of our data structures. In what follows, we denote the length of the shortest path between two vertices $u$ and $v$, i.e., $|\spath{}(u,v)|$ by the parameter $t$, and $s$ denotes the function $\log{n}/\log\log{n}$.}
	\label{table:LB1}
	\begin{tabular}{|c|c|c|c|c|}
		\hline
		Graph class & $\degree{}(v)$/$\adjacent{}(u,v)$ &  $\neighbor{}(v)$ & $\spath{}(u,v)$ & Ref. \\ \hline
		interval & $O(1)$ &  $O(\degree{}(v))$& $O(t)$ & Thm.~\ref{interval:upper} \\
		proper/unit & $O(1)$ & $O(\degree{}(v))$& $O(t)$ & Thm.~\ref{proper} \\
		$k$-(im)proper & $O(\log \log{k})$ &  $O(\log \log{k} \cdot \degree{}(v))$ & $O(\log \log{k} \cdot t)$ & Thm.~\ref{kproper:upper} \\
		circular arc & $O(s)$ &  $O(\degree{}(v) \cdot s)$& $O(st)$ & Thm.~\ref{thm:circular}\\
		\hline
	\end{tabular}
\end{table}

We list all the preliminary data structures and graph theoretic terminologies that will be used throughout this paper, in 
Section~\ref{sec:prelim}.
Given an unlabeled interval graph $G$ with $n$ vertices, in Section~\ref{lower_bound} we first show that at least $n\log{n} - 2n\log\log n - O(n)$ bits are necessary to represent $G$, answering an open problem of Yang and Pippenger~\cite{Pipp}. More specifically, Yang and Pippenger~\cite{Pipp} showed a lower bound of $(n\log{n})/3 +O(n)$-bit for representing any unlabeled interval graph and asked whether this lower bound can be further improved. As circular-arc graphs are generalizations of the interval graphs, note that, a same lower bound result holds true for circular-arc graphs as well.
Next in Section~\ref{upper_bound}, we improve the trivial $(2n\ceil{ \log n})$-bit representation of $G$ (obtained by storing all the intervals correspond to the vertices in $G$ explicitly), by proposing an ($n\log{n} + O(n)$)-bit representation while  being able to support the relevant queries optimally, where the queries are defined as follows. For any two vertices $u, v \in G$,

\begin{itemize}
\item $\degree{}(v)$: returns the number of vertices that are adjacent to $v$ in $G$,
\item $\adjacent{}(u,v)$: returns true if $u$ and $v$ are adjacent in $G$, and false otherwise,
\item $\neighbor{}(v)$: returns all the vertices that are adjacent to $v$ in $G$, and
\item $\spath{}(u,v)$: returns the shortest path between $u$ and $v$ in $G$.
\end{itemize}

We show that all these queries can be supported optimally using our succinct data structure for interval graphs. More precisely, for any two vertices $v,u \in G$, we can answer $\degree{}(v)$ and $\adjacent{}(u,v)$ queries in $O(1)$ time, $\neighbor{}(v)$ queries in $O(\degree{}(v))$ time, and $\spath{}(u,v)$ queries in $O(|\spath{}(u,v)|)$ time. Note that our results for interval graphs not only improve the previous best space bound of $2n \log n$ bits (given by Klav{\'{\i}}k et al.~\cite{KlavikOS16}), but we can also support navigational queries optimally, a feature not present in~\cite{KlavikOS16}. Furthermore, 
in Section~\ref{app:graph},
we show how one can implement various fundamental graph algorithms in interval graphs, for example depth-first search ({\sf DFS}), breadth-first search ({\sf BFS}), computing a maximum independent set, determining a maximum clique, both time and space efficiently using our succinct representation for interval graphs. 

In Section~\ref{extend}, we extend our ideas to other variants of interval graphs, for example, {\it proper/unit interval graphs, k-proper and k-improper interval graphs, and circular-arc graphs}, and design succinct data structures for these graph classes as well along with supporting queries on them efficiently. For definitions of these graphs, see Section~\ref{extend}. Our succinct data structures (with efficient query support) for proper/unit, and $k$-(im)proper interval graphs improves the results of Klav{\'{\i}}k et al.~\cite{KlavikOS16} who designed encodings for these graph classes with no query support. More specifically, the asymptotic space consumption of their data structures is same as ours, yet we can support additionally the navigational queries efficiently. We summarize all of our results in Table~\ref{table:LB} and Table~\ref{table:LB1} respectively where Table~\ref{table:LB} captures the matching upper/lower bound on the space requirements of each of the graph classes and Table~\ref{table:LB1} lists all the query times of our data structures. Finally we conclude in Section~\ref{conclusion} with some remarks on possible future directions for exploring.

\section{Preliminaries}\label{sec:prelim}
We will use the following data structures in the rest of this paper.
\\\\
\noindent{\bf Rank and Select queries:}
Let $S = s_1, \dots, s_n$ be a sequence of size $n$ over an alphabet $\Sigma = \{0, 1, \dots, \sigma-1\}$.
Then for $1 \le i \le n$, and $\alpha \in \Sigma$, one can define $\rank$ and $\select$ queries as follows. 

\begin{itemize}
	\item $\rank_{\alpha}(S,i)$ = the number of occurrences of $\alpha$ in $s_1 \dots s_i$.
	\item $\select_{\alpha}(S,i)$ = the position $j$ where $s_j$ is the $i$-th $\alpha$ in $S$.
\end{itemize}
The following lemma shows that these operations can be supported efficiently using optimal space. 

\begin{lemma}[\cite{Clark:1996:EST:313852.314087, DBLP:conf/soda/GolynskiMR06}]
\label{rankselect}
Given a sequence $S = s_1, \dots, s_n$ of size $n$ over an alphabet $\Sigma = \{0, 1, \dots, \sigma-1\}$ for any $\sigma > 1$, for any $\alpha \in \Sigma$, there exists an $n \log{\sigma} + o(n\log{\sigma})$-bit data structure which answers $\rank{}_{\alpha}$ queries on $S$ and access any element of $S$ in $O(\log{(1+\log{(\sigma)})})$ time, and $\select{}_{\alpha}$ queries on $S$ in $O(1)$ time.
\end{lemma}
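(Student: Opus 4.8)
The plan is to treat two regimes separately and then combine them. For the binary case $\sigma=2$ the claim reduces to the classical succinct bit-vector result of Jacobson and Clark: a bit string of length $n$ can be stored in $n+\lito{n}$ bits so that $\rank$, $\select$, and access are all answered in $\bigO{1}$ time, which already matches the stated bounds since $\log(1+\log 2)=\bigO{1}$. The substance is the general-alphabet case, where I would reconstruct the structure of Golynski, Munro and Rao. The key preprocessing step is to \emph{stably sort} the positions $1,\dots,n$ by the symbol they carry; this produces a permutation $\pi$ of $[n]$ together with cumulative counts $C[\alpha]$ recording how many positions carry a symbol smaller than $\alpha$. Under this reduction the three target operations become purely index manipulations: $\select_\alpha(S,j)=\pi^{-1}(C[\alpha]+j)$; $\rank_\alpha(S,i)$ is read off by locating $i$ inside the $\alpha$-block of the sorted order; and access of $S[i]$ returns the symbol whose sorted block contains the rank of $i$.

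To keep the space at $n\log\sigma+\lito{n\log\sigma}$ rather than the $n\log n$ a generic permutation of $[n]$ would cost, I would not store $\pi$ monolithically. Instead I would cut $S$ into $n/\sigma$ chunks of $\sigma$ consecutive symbols and reduce everything to the \emph{square case}, in which the chunk length equals the alphabet size. Within a chunk the grouping of its $\sigma$ positions by symbol is encoded by a length-$2\sigma$ bit vector (for each symbol, its occurrence count written in unary, the counts separated by zeros); bit-vector $\select$ then locates a symbol's block in $\bigO{1}$ time, and this is precisely what yields the $\bigO{1}$ bound for $\select_\alpha$ overall. The opposite direction, needed for access and $\rank_\alpha$, amounts to a predecessor search inside a chunk over a universe of size $\bigO{\sigma}$; implementing it with a van Emde Boas / $y$-fast-trie style structure costs $\bigO{\log\log\sigma}=\bigO{\log(1+\log\sigma)}$ time while occupying only $\lito{n\log\sigma}$ bits. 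A global layer over the $n/\sigma$ chunks, built from the same bit-vector primitives and recording how each symbol's occurrences are distributed across chunks, stitches the chunk-local answers into global ones.

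The space accounting is then: the essential content (the within-chunk groupings, equivalently the stable-sort information) occupies $n\log\sigma(1+\lito{1})$ bits, and every auxiliary index — the $\rank$/$\select$ directories on the bit vectors, the predecessor structures, and the chunk-combination layer — is engineered to be $\lito{n\log\sigma}$, giving the claimed total. I expect the main obstacle to be exactly this simultaneous reconciliation: getting $\bigO{1}$ time for $\select_\alpha$ together with $n\log\sigma+\lito{n\log\sigma}$ space is what forces the decomposition into $\sigma$-sized chunks, and one must then verify two things at once, namely that all predecessor and rank searches stay confined to universes of size polynomial in $\sigma$ (so their cost is $\bigO{\log\log\sigma}$), and that the global chunk-combination layer admits an encoding summing to only $\lito{n\log\sigma}$ bits rather than the naive $\Theta(n\log\sigma)$ a full per-chunk multiplicity table would need. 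The remaining work — gluing the chunk-local operations to the global answers and checking that the many small auxiliary tables contribute only additive lower-order terms — is routine bookkeeping that follows the cited constructions.
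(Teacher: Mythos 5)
The paper does not actually prove this lemma; it is imported as a black box from Clark's thesis (the binary case) and Golynski--Munro--Rao (general alphabets), and your sketch is a faithful high-level reconstruction of exactly those cited constructions --- the reduction to the square case via chunks of length $\sigma$, unary-encoded per-chunk multiplicities for $O(1)$ $\select$, and a predecessor search over a universe of size $O(\sigma)$ for the $O(\log\log\sigma)$ $\rank$/access bound --- so there is no divergence to report. The one place your ``routine bookkeeping'' hides a real case split is the small-alphabet regime: for constant or polylogarithmic $\sigma>2$ the $O(n)$-bit global chunk-distribution layer is \emph{not} $o(n\log\sigma)$, and the claimed space bound there requires a separate argument (a table-based or multi-ary wavelet-tree structure, which also yields $O(1)$ time and is why the stated query time degrades gracefully to $O(1)$ as $\sigma\to 2$); the cited papers handle this explicitly.
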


\noindent{\bf Range Maximum Queries:}
Given a sequence $S = s_1, \dots, s_n$ of size $n$, for $1 \le i, j \le n$, the \textit{range maximum query} on range $[i, j]$ (denoted by $\rmax{}_S (i, j)$) returns the position $i \le k \le j$ such that $s_k$ is a maximum value in $s_i \dots s_j$ (if there is a tie, we return the leftmost such position).
One can define the \textit{range minimum queries} on range $[i, j]$ ($\rmin{}_S(i, j)$) analogously.
The following lemma shows that there exist data structures which can answer these queries efficiently using optimal space.

\begin{lemma}[\cite{DBLP:journals/algorithmica/BrodalDR12, DBLP:journals/siamcomp/FischerH11}]\label{rmq}
Given a sequence $S$ of size $n$ and for any $1 \le c \le n$, 
\begin{enumerate}
\item there exists a data structure of  size $O(n/c)$ bits, in addition to storing the sequence $S$, which supports $\rmax{}_S$ and $\rmin{}_S$ queries in $O(c)$ time while supporting access on $S$ in $O(1)$ time. 
\item there exists a data structure of  size $2n+o(n)$ bits (that does not store the sequence $S$) which supports $\rmax{}_S$ or $\rmin{}_S$ queries in $O(1)$ time.
\end{enumerate}
\end{lemma}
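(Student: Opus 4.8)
The plan is to prove the two parts separately, relying on the classical reduction of range minimum queries to the topology of the Cartesian tree; I treat $\rmin{}$ throughout, since the case of $\rmax{}$ is symmetric after negating the entries of $S$, and ties are resolved by fixing the leftmost-wins convention of the statement.

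For part~(2), the first step is the combinatorial observation that the answer to $\rmin{}_S(i,j)$ depends only on the \emph{shape} of the Cartesian tree of $S$, not on the actual values: if $T$ is the Cartesian tree whose in-order traversal lists the positions $1,\dots,n$ and whose heap order follows the values of $S$, then $\rmin{}_S(i,j)$ is the in-order label of the lowest common ancestor (LCA) of the nodes labelled $i$ and $j$. Since there are exactly $C_n$ (the $n$-th Catalan number) distinct binary trees on $n$ nodes and $\log{C_n}=2n-\Theta(\log{n})$, the shape of $T$ carries only $2n-\Theta(\log{n})$ bits of information, so a structure that does not store $S$ cannot beat $2n$ bits and, crucially, need not. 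The next step is therefore to encode $T$ (equivalently, its balanced-parenthesis sequence) in $2n+\lito{n}$ bits using a succinct tree representation that supports LCA together with in-order rank/select in $\bigO{1}$ time with only $\lito{n}$ redundant bits; composing these operations answers $\rmin{}_S(i,j)$ in $\bigO{1}$ time without ever touching $S$.

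For part~(1), the idea is a two-level blocking scheme that trades query time for space. Partition the indices $1,\dots,n$ into $\ceil{n/c}$ consecutive blocks of length $c$, and let $m_1,\dots,m_{\ceil{n/c}}$ be the (conceptual, unstored) sequence of block minima. Build the $\bigO{n/c}$-bit constant-time $\rmin{}$ structure of part~(2) on this length-$\ceil{n/c}$ sequence; its construction reads each $m_b$ by an $\bigO{c}$-time scan of a block using the provided access to $S$, but at query time it reports a block index in $\bigO{1}$ with no access at all. To answer $\rmin{}_S(i,j)$, decompose $[i,j]$ into a left partial block, a maximal span of whole interior blocks, and a right partial block. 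The two partial blocks are scanned directly in $\bigO{c}$ time; for the interior, the block-level structure returns in $\bigO{1}$ the block holding the overall minimum, which is then scanned in $\bigO{c}$ time to pin down the exact position. Comparing the at most three candidate minima costs $\bigO{1}$ accesses to $S$, giving $\bigO{c}$ total time and $\bigO{n/c}$ space on top of $S$.

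I expect the main obstacle to be part~(2): establishing that the LCA-to-$\RMQ$ reduction is exact and then realising it within the tight $2n+\lito{n}$ budget. The delicate point is not the Catalan counting but the engineering of a tree encoding whose LCA and in-order navigation both run in $\bigO{1}$ while the redundancy stays $\lito{n}$; this is precisely where the succinct balanced-parenthesis machinery of the cited works is needed, and where a naive pointer-based Cartesian tree, costing $\Th{n\log{n}}$ bits, would fail. Part~(1) is comparatively routine once part~(2) is in hand, the only care being to build the block-level structure without materialising the block-minima array.
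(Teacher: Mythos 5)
Your proposal is correct and follows the standard arguments behind the cited results: the Cartesian-tree/LCA reduction with a succinct balanced-parenthesis encoding for the $2n+o(n)$-bit constant-time structure, and the block-decomposition built on top of it for the $O(n/c)$-bit, $O(c)$-time tradeoff. The paper itself gives no proof of this lemma (it is imported verbatim from the cited works), and your reconstruction matches how those works establish it, so there is nothing to flag.
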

\noindent{\bf Graph Terminology and Input Representation:} We will assume the knowledge of basic graph theoretic terminology as given in~\cite{Diestel} and basic graph algorithms as given in~\cite{CLRS}. Throughout this paper, $G=(V,E)$ will denote a simple undirected graph with the vertex set $V$ of cardinality $n$ and the edge set $E$ having cardinality $m$. We call $G$ an {\it interval graph} if (a) with every vertex we can associate a closed interval on the real line, and (b) two vertices share an edge if and only if the corresponding intervals are not disjoint (see Figure~\ref{figure:interval} for an example). It is well known that given an interval graph with $n$ vertices, one can assign intervals to vertices such that every end point is a distinct integer from $1$ to $2n$ using $O(n \log n)$ time~\cite{Hanlon}, and in the rest of this paper, we deal exclusively with such representations. Moreover,  for vertex $v \in V$, we refer to $I_v$ as the interval corresponding to $v$.  
\section{Counting the number of unlabeled interval graphs}\label{lower_bound}

This section deals with counting unlabeled interval graphs. Let $\mathcal{I}_n$ denote the number of unlabeled interval graphs  on $n$ vertices. This is the sequence with id A005975 in the On--Line Encyclopedia of Integer Sequences~\cite{OEIS}. Initial values of this sequence are given by Hanlon~\cite{Hanlon} but he did not prove an asymptotic form for enumerating the sequence. Answering a question posed by Hanlon~\cite{Hanlon}, Yang and Pippenger~\cite{Pipp} proved that the generating function
$\mathcal{I}(x)=\sum_{n\ge 1} \mathcal{I}_nx^n$
diverges for any $x\not=0$ and they established the bounds
\beq{Pipp}
\frac{n\log n}{3}+O(n) \le \log \mathcal{I}_n \le n\log n+O(n).
\eeq

The upper bound in~\eqref{Pipp} follows from $\mathcal{I}_n \le (2n-1)!!=\prod_{j=1}^{n} (2j-1)$, where the right hand side is the number of matchings on $2n$ points on a line. For the lower bound, the authors showed
$\mathcal{I}_{3k} \ge k!/3^{3k}$
by finding an injection from $S_k$, the set of permutations of length $k$, to three-colored interval graphs of size $3k$. Furthermore, they left it open whether the leading terms of the lower and upper bounds in~\eqref{Pipp} can be matched, which is what show in affirmative by improving the lower bound. In other words, we find the asymptotic value of $\log \mathcal{I}_n$. In what follows, for a set $S$, we denote by ${S \choose k}$ the set of $k$-subsets of $S$.


\begin{theorem}\label{main}
Let $\mathcal{I}_n$ be the number of unlabeled interval graphs with $n$ vertices. As $n\to \infty$, we have 
\beq{goal}
\log \mathcal{I}_n \ge n\log n- 2n\log\log n- O(n).
\eeq
\end{theorem}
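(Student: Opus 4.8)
The plan is to prove the lower bound by exhibiting an explicit injection from a large product set into the isomorphism classes of $n$-vertex interval graphs, and then to read the bound off from the cardinality of that set. Concretely, I would fix an alphabet size $q=\lfloor n/\log^2 n\rfloor$ and encode an arbitrary word $w=w_1\cdots w_n\in\{1,\dots,q\}^n$ as an interval graph $G_w$, arranged so that $w$ is recoverable from the isomorphism type of $G_w$ alone. That such a bound is plausible can be seen from the forward-reach viewpoint: an interval graph, presented with its intervals sorted by left endpoint, has the property that each vertex is adjacent to a contiguous block of later vertices, so the graph is determined by the ``reach'' of each vertex (the number of subsequent intervals it meets), and there are roughly $(n-1)!$ admissible reach sequences. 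The difficulty is that many reach sequences describe isomorphic graphs, so what one really needs is a construction whose isomorphism type \emph{transparently} reveals its encoding. Granting such a $G_w$, distinct words give pairwise non-isomorphic graphs up to a controlled amount of overcounting, whence $\mathcal{I}_{\Theta(n)}\ge q^n/2^{O(n)}$; taking logarithms and rescaling $n$ by the constant factor needed to absorb auxiliary vertices yields $\log \mathcal{I}_n\ge n\log(n/\log^2 n)-O(n)=n\log n-2n\log\log n-O(n)$.

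For the construction I would use a rigid backbone together with local labels. First I lay down an induced path $P_1,\dots,P_n$ of pairwise-consecutively-overlapping intervals; in the unlabeled graph this path fixes a linear order on the $n$ slots up to the single global reversal. Far to the right I place a \emph{ruler} of $q$ landmark intervals $A_1,\dots,A_q$ whose nesting (or staircase) pattern, reinforced by distinct degree signatures, makes each $A_j$ and the orientation of the ruler identifiable. Finally, to slot $i$ I attach a single pointer interval $B_i$ that starts next to $P_i$ and extends rightward just far enough to overlap exactly the prefix $A_1,\dots,A_{w_i}$, so that $w_i=\lvert N(B_i)\cap\{A_1,\dots,A_q\}\rvert$. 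Because each $B_i$ hangs off a distinct, positionally identifiable $P_i$, the pointers are mutually distinguishable and the twin/multiset degeneracy is avoided. The reason the alphabet is forced down to $q=\Theta(n/\log^2 n)$ is that the landmarks must be separated by gadgets so that each prefix length can be read unambiguously from the unlabeled graph; this poly-logarithmic loss in resolution is exactly the source of the $2n\log\log n$ term rather than a mere $O(n)$ loss.

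The decoding argument, which I expect to be the main obstacle, runs as follows: from $G_w$ one first isolates the landmark vertices by their structural signature and orients the ruler, then recognizes the backbone path together with its order (up to reversal), and finally, using the adjacency of each $B_i$ to its $P_i$, matches pointers to slots and reads off each prefix length $w_i$. The delicate points are (i) \emph{realizability}, namely checking that the prescribed overlaps among the pointers, the backbone, and the ruler are simultaneously consistent with a genuine system of intervals with distinct integer endpoints in $\{1,\dots,2N\}$; and (ii) \emph{rigidity}, namely ensuring that the only automorphisms of $G_w$ are the identity and the global reversal, so that $w\mapsto G_w$ is injective up to a factor of two and no two words collide for spurious reasons. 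Guaranteeing (ii) cheaply is precisely why the landmarks must be spaced out, and the heart of the matter is to bound that overhead so it costs only $O(\log\log n)$ bits per vertex instead of $O(\log n)$; this is what separates the claimed $-2n\log\log n$ bound from a weaker one. Once injectivity up to $2^{O(n)}$ identifications is established, the counting step is immediate.
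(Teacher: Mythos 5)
There is a genuine gap, and it sits precisely in the step you describe as ``rescaling $n$ by the constant factor needed to absorb auxiliary vertices.'' Your construction spends $n$ backbone vertices $P_1,\dots,P_n$ plus $q$ ruler vertices in addition to the $n$ information-carrying pointers $B_i$, so the graph $G_w$ has $N\ge 2n$ vertices while encoding only about $n\log q$ bits. Rewriting the conclusion in terms of the actual vertex count $N$ gives at best $\log \mathcal{I}_N \ge (N/2)\log N - O(N\log\log N)$: the rescaling is not harmless, it divides the leading coefficient by the blow-up factor. This is exactly the phenomenon that kept Yang and Pippenger at $(n\log n)/3$ (they used $3k$ vertices to encode a permutation of length $k$), and the entire content of the theorem is to push the leading constant to $1$. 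Nor can you simply delete the backbone: its only role is to order the pointers so that $w$ is a sequence rather than a multiset, and without it your $q^n$ collapses to $\binom{n+q-1}{n}=2^{O(n)}$, because an alphabet of size $q<n$ carries no information under unordered selection.

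The paper's proof repairs both defects at once. Its auxiliary structure is only $2k$ pairwise-disjoint landmark intervals with $k=\lfloor n/\log n\rfloor$, hence $o(n)$ vertices, and each of the remaining $n-2k$ information vertices is an interval with \emph{two} free endpoints, one chosen among $k$ positions on the left and one among $k$ on the right. The universe therefore has size $k^2\gg n$, so even an \emph{unordered} choice of $n-2k$ such intervals yields $\binom{k^2}{n-2k}\ge (k^2/n)^{n-2k}$ configurations, i.e.\ $\log n-2\log\log n$ bits per vertex with no backbone at all. Finally, instead of engineering rigidity (your point (ii)), the paper three-colors the vertices (landmarks blue and red, information vertices white) and observes that a white vertex $w$'s interval is recovered as $[-d_B(w),d_R(w)]$ from its blue and red degrees alone; forgetting the coloring costs only a factor $3^n=2^{O(n)}$, which is absorbed into the error term. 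If you want to salvage your approach, the two lessons are that the auxiliary gadgetry must total $o(n)$ vertices, and that colors, paid for by a $2^{O(n)}$ factor, are a far cheaper substitute for rigidity.
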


\begin{proof}
We consider certain interval graphs on $n$ vertices with colored vertices. Let $k$ be a positive integer smaller than $n/2$ such that $k^2 \ge n -2k$, and $\eps$ a positive constant smaller than $1/2$.
For $1\le j\le k$, let $B_j$ and $R_j$ denote the intervals $[-j-\eps,-j+\eps]$ and $[j-\eps,j+\eps]$, respectively. 
These $2k$ pairwise-disjoint intervals will make up $2k$ vertices in the graphs we consider.
Now let $\cal W$ denote the set of $k^2$ closed intervals with one endpoint in $\{-k,\dots,-1\}$ and the other in $\{1,\dots,k\}$.
We color $B_1,\dots,B_k$ with blue, $R_1,\dots,R_k$ with red, and the $k^2$ intervals in $\cal W$ with white.

Together with $\cal S:=\{B_1,\dots,B_k,R_1,\dots,R_k\}$, each $\{J_1,\dots,J_{n-2k}\} \in {\cal W \choose n-2k}$ gives an $n$-vertex, three-colored interval graph.
For a given $\cal J=\{J_1,\dots,J_{n-2k}\}$, let $G_{\cal J}$ denote the colored interval graph whose vertices correspond to $n$ intervals in $\cal S \cup \cal J$, and let $\cal G$ denote the set of all $G_{\cal J}$.

Now let $G\in \cal G$. For a white vertex $w \in G$, the pair $(d_B(w),d_R(w))$, which represents the numbers of blue and red neighbors of $w$, uniquely determine the interval corresponding to $w$;
this is the interval $[-d_B(w),d_R(w)]$.
In other words, $\cal J$ can be recovered from $G_{\cal J}$ uniquely. Thus
$|\cal G| = {k^2 \choose n-2k}$.
Since there are at most $3^n$ ways to color the vertices of an interval graph with blue, red, and white, we have
\[
\mathcal{I}_n\cdot 3^n \ge |\cal G| = {k^2 \choose n-2k}\ge \left(\frac{k^2}{n-2k}\right)^{n-2k} \ge \left(\frac{k^2}{n}\right)^{n-2k}
\]
for any $k< n/2$. Setting $k=\floor{n/\log n}$ and taking the logarithms, we get
\[
\log \mathcal{I}_n \ge (n-2k)\log(k^2/n)-O(n)= n\log n-2n\log\log n -O(n). 
\]
\qed
\end{proof}

\begin{remark}
Yang and Pippenger~\cite{Pipp} also posed the question whether
$\log \mathcal{I}_n=Cn\log n+O(n)$
for some $C$ or not. According to Theorem~\ref{main}, this boils down to getting rid of the $2n\log\log n$ term in~\eqref{goal}. 
Such a result would imply that the exponential generating function
$J(x)=\sum_{n\ge 1} I_n x^n/n!$
has a finite radius of convergence. (As noted in~\cite{Pipp}, the bound $\mathcal{I}_n\le (2n-1)!!$ implies that the radius of convergence of $J(x)$ is at least $1/2$).
\end{remark}

\begin{remark}\label{cirlb}
After the publication of the conference version of this article~\cite{AcanCJS19}, it was pointed out to us by Cyrill Gavoille that an earlier paper from 2008 (much earlier than even Yang and Pipinger's sub-optimal lower bound result~\cite{Pipp}) by Gavoille and Paul~\cite{DBLP:journals/siamdm/GavoilleP08} already showed this lower bound for interval graphs, by obtaining a lower bound for labeled interval graphs. 
\end{remark}

\begin{remark}\label{cirlb}
A \textit{circular-arc graph} $G$ is defined as a graph whose vertices can be assigned to arcs on a circle so that two vertices are adjacent in $G$ if and only if their assigned arcs intersect. It is easy to see that every interval graph is a circular-arc graph. Hence, from Theorem~\ref{main}, we can also deduce that given a circular-arc graph $G$ with $n$ vertices, one needs at least $n\log n- 2n\log\log n- O(n)$ bits to represent $G$
\end{remark}

\section{Succinct representation of interval graphs}\label{upper_bound}

In this section, we introduce a succinct $n\log{n} + (2+\epsilon)n + o(n)$-bit representation of unlabeled interval graph $G$ on $n$ vertices with constant $\epsilon > 0$, and show that the navigational queries (\degree{}, \adjacent{}, \neighbor{}, and \spath{} queries) and some basic graph algorithms ({\sf BFS, DFS, PEO} traversals, proper coloring, computing the size of a maximum clique and maximum independent set) on $G$ can be answered/executed efficiently using our representation of $G$. 

\subsection{Succinct Representation of $G$}\label{represent}
We first label the vertices of $G$ using the integers from $1$ to $n$, as described in the following. By the assumption in Section~\ref{sec:prelim}, the vertices in $G$ can be represented by $n$ intervals $I = \{I_1 = [l_1, r_1], I_2 = [l_2, r_2], \dots, I_n = [l_n, r_n]\}$ where all the endpoints in $I$ are distinct integers in the range $[1,2n]$. Since there are $2n$ distinct endpoints for the $n$ intervals in $I$, every integer in $[1,2n]$ corresponds to a unique $l_i$ or $r_i$ for some $1 \le i \le n$. We assign the labels to the vertices in $G$ based on the sorted order of left endpoints of their corresponding intervals, i.e., for any two vertices $a, b \in G$, $a < b$ if and only if $l_a < l_b$.  

\begin{figure}[htp]
	\begin{center}
		\includegraphics[scale=0.4]{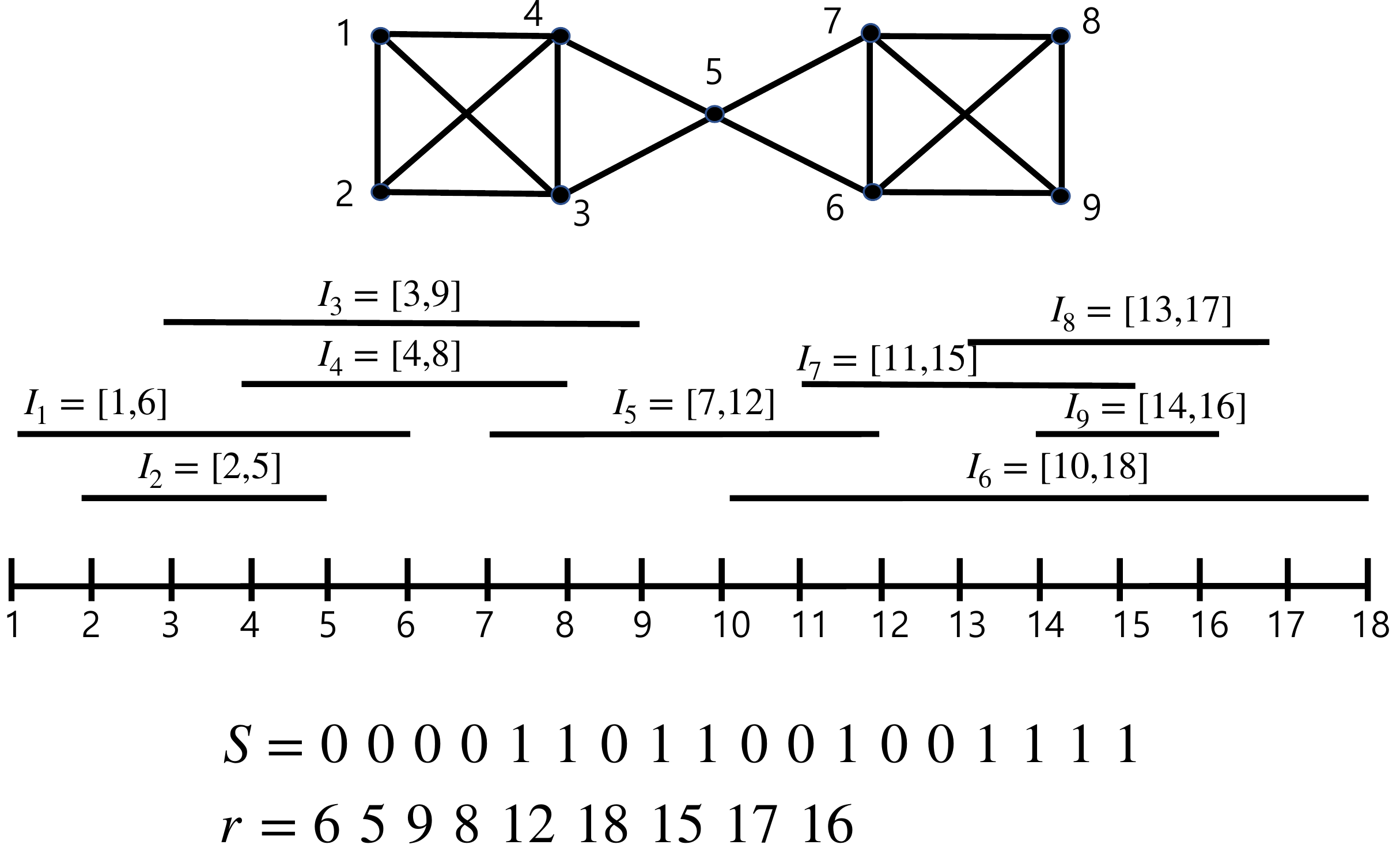}
	\end{center}
	\caption{Example of the interval graph and its representation.}
	\label{figure:interval}
\end{figure}

Now we describe the representation of $G$. Let $S = s_1 \dots s_{2n}$ be the binary sequence of length $2n$ such that
for $1 \le i \le 2n$, $s_i = 0$ if $i \in \{l_1, l_2, \dots, l_n\}$ (i.e., if $i$ corresponds to the left end point of an interval in $I$), 
and $s_i = 1$ otherwise. 
If $i = l_k$ or $i  = r_k$, we say that $s_i$ corresponds to the interval $I_k$.  
By storing the data structure of Lemma~\ref{rankselect} on $S$, we can answer $\rank{}$ and $\select$ queries on $S$ in $O(1)$ time, using $2n+o(n)$ bits.
Next, we store the sequence $r = r_1 \dots r_n$, and for some fixed constant $\epsilon > 0$, we also store an $\epsilon n$-bit data structure of Lemma~\ref{rmq}(1) (with $c = 1/\epsilon$) to support $\rmax$ and $\rmin$ queries on $r$ in $O(1)$ time.
Using the representations of $S$ and $r$, it is easy to show that for any vertex $v \in G$, we can return its corresponding interval $I_v= [l_v, r_v]$ in $O(1)$ time by computing $l_v = \select_0(S, v)$, and $r_v$ can be accessed from the sequence $r$.
Thus, the total space usage of our representation is $n\log{n} + (2+\epsilon)n + o(n)$ bits. See Figure~\ref{figure:interval} for an example. 

\subsection{Supporting Navigational Queries}\label{navi}
In this section, we show that $\degree{}$, $\adjacent{}$, $\neighbor{}$, and $\spath{}$ queries on $G$ can be answered in asymptotically optimal time using the representation described in the Section~\ref{represent}. 
\\\\
\noindent\textbf{$\degree{}\pmb{(v)}$ query.}
We count the number of vertices in $G$ which are not adjacent to $v$, which is a disjoint union of the two sets: (i) the set of intervals that end before the 
starting point $l_v$, and (ii) the set of intervals that start after the end point $r_v$. Using our representation the cardinalities of these two sets can be computed as follows.
The number of intervals $u$ with $r_u < l_v$ is given by $\rank_1{}(S, l_v)$. 
Similarly, the number of intervals $u$ with $r_v < l_u$ is given by $n-\rank_0{}(S, r_v)$.
Therefore, we can answer $\degree{}(v)$ query in $O(1)$ time by returning $n-\rank_1{}(S, l_v)-(n-\rank_0{}(S, r_v)) = \rank_0{}(S, r_v) - \rank_1{}(S, l_v)$.
%
%
\\\\
\noindent\textbf{$\adjacent{}\pmb{(u,v)}$ query.}
Since we can compute the intervals $I_u$ and $I_v$ in $O(1)$ time, $\adjacent{}\pmb{(u,v)}$ query can be answered in $O(1)$ by checking
$r_u < l_v$ or $r_v < l_u$ ($u$ and $v$ are not adjacent if and only if one of these conditions is satisfied).
\\\\
\noindent\textbf{$\neighbor{}(\pmb{v})$ query.}
The set of all neighbors of a vertex $v$ can be reported by considering all the intervals $I_u$ whose 
left end points are within the range $[1, \dots, r_v]$ and returning all such $u$'s with $r_u > l_v$ 
(i.e., which start to the left of $r_v$ and end after $l_v$). With our data structure, this query can be 
supported by returning the set $\{ u ~ | ~ 1 \le u \le rank_0(S, r_v) \mbox{ and } r_u > l_v \}$.
Using the \rmax{} structure stored on $r$, this can be supported in $O(\degree{}(v))$ time.
%
Note that given a threshold value $l_v$ and a query range $[a,b]$ of the sequence $r$, the range max data structure can be used to report all the elements $r_u$ within the range $[a,b]$ such that $r_u > t$, in $O(1)$ time per element, using the following recursive procedure.
Compute the position $c = \rmax_r(a,b)$. If $r_c > l_v$, then return $r_c$, and recurse on the subintervals $[a,c-1]$ and $[c+1, b]$; else stop.
\\\\
\noindent\textbf{$\spath{}\mathbf{(u,v)}$ query.}
We first define the $\SUCC$ query as described in~\cite{DBLP:journals/networks/ChenLSS98}. 
For an interval $I_u$, $\SUCC(I_u)$ returns the interval $I_{u'}$ such that 
$I_u \cap I_{u'} \neq \emptyset$ and there is no 
$I_{u''}$ with $I_u \cap I_{u''} \neq \emptyset$ and $r_{u'} < r_{u''}$.
(For example in Figure~\ref{figure:interval}, $\SUCC{}(I_2)= I_3$ and $\SUCC{}(I_5)= I_6$.)
To answer the $\spath{}(u,v)$ query, let $P_{uv}$ be the shortest path from $u$ to $v$ initialized with $\emptyset$ 
(without loss of generality, we assume that $u \le v$).
If $u$ and $v$ are identical, we simply add $u$ to $P_{uv}$ and return $P_{uv}$.  
If not, we first add $u$ to $P_{uv}$ and consider two cases as follows~\cite{DBLP:journals/networks/ChenLSS98}.
\begin{itemize}
	\item If $u$ is adjacent to $v$, add $v$ to $P_{uv}$ and return $P_{uv}$.
	\item If $I_u$ is not adjacent to $I_v$, we perform	$\spath{}(\SUCC(u), v)$ query recursively.  
\end{itemize}

Since we can answer $\adjacent{}$ queries in $O(1)$ time, it is enough to show how to answer the
$\SUCC{}$ queries in $O(1)$ time. Let $k$ be the number of vertices $v$ which satisfies $l_v < r_u$, 
which can be answered in $O(1)$ time by $k = rank_0(S, r_u)$).
Then by the definition of $\SUCC{}$ query, $I_i$ with $i = \rmax_r(1, k)$ gives an answer of $\SUCC(I_u)$ if $r_i > l_u$ 
(if not, there is no vertex in $G$ adjacent to $u$).
Therefore we can answer the $\SUCC{}$ query in $O(1)$ time, which implies $\spath{}(u, v)$ query can be answered in $O(|spath(u,v)|)$ time. Thus, we obtain a following theorem.

\begin{theorem}\label{interval:upper}
	Given an interval graph $G$ with $n$ vertices, there exists an $n\log{n}+(2+\epsilon)n+o(n)$-bit representation of $G$ which answers $\degree{}(v)$ and $\adjacent{}(u, v)$ queries in $O(1)$ time, $\neighbor{}(v)$ queries in $O(\degree{}(v))$ time, and $\spath{}(u, v)$ queries in $O(|\spath{}(u,v)|)$ time, for any vertices $u, v \in G$.
\end{theorem}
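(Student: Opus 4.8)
The plan is to assemble the representation constructed in Section~\ref{represent} with the four query routines developed in Section~\ref{navi}, and then to verify two things in turn: the claimed space bound, and the correctness and running time of each of the four operations. Recall that the encoding has three ingredients: the length-$2n$ bitvector $S$ marking left versus right endpoints (equipped with the $\rank$/$\select$ structure of Lemma~\ref{rankselect}), the array $r=r_1\dots r_n$ of right endpoints, and an $\rmax{}$/$\rmin{}$ index on $r$ from Lemma~\ref{rmq}(1). Because the labelling by sorted left endpoints makes $l_v=\select_0(S,v)$ while $r_v$ is read from $r$, every interval $I_v=[l_v,r_v]$ is recoverable in $\bigO{1}$ time, and this primitive underlies all four queries.

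For the space bound I would account for the three parts separately. The bitvector $S$ together with its $\rank$/$\select$ structure costs $2n+\lito{n}$ bits by Lemma~\ref{rankselect}, and the $\rmax{}$/$\rmin{}$ index of Lemma~\ref{rmq}(1) with parameter $c=1/\eps$ costs $\eps n$ bits. The endpoints contribute essentially $n\log n$ bits; the care-point here is to keep the linear overhead at $(2+\eps)n$ rather than $(3+\eps)n$ by storing each $r_i$ as its $\lceil\log n\rceil$-bit index among the $n$ one-positions of $S$ (recoverable by one $\select_1$) instead of as a raw value in $[1,2n]$. Summing the three parts then yields $n\log n+(2+\eps)n+\lito{n}$ bits.

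For the queries I would argue each in isolation. The $\adjacent{}(u,v)$ test reduces to comparing two endpoints, hence $\bigO{1}$; the $\degree{}(v)$ count follows by writing the non-neighbours of $v$ as the disjoint union of the intervals ending before $l_v$ and those starting after $r_v$, whose sizes are $\rank_1(S,l_v)$ and $n-\rank_0(S,r_v)$, so that $\degree{}(v)=\rank_0(S,r_v)-\rank_1(S,l_v)$ in $\bigO{1}$. For $\neighbor{}(v)$ the engine is the threshold-reporting primitive on the $\rmax{}$ structure: restricting to indices $1\le u\le\rank_0(S,r_v)$ and recursively emitting those $u$ with $r_u>l_v$ via $\rmax_r$ reports every neighbour with constant work per reported vertex, giving $\bigO{\degree{}(v)}$ total. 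Finally $\spath{}(u,v)$ is handled by iterating $\SUCC$, where $\SUCC(I_u)$ is computed in $\bigO{1}$ as the interval at position $\rmax_r(1,\rank_0(S,r_u))$ (provided its right endpoint exceeds $l_u$); since each step appends one vertex, the running time is $\bigO{|\spath{}(u,v)|}$.

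The step I expect to carry the real weight is the correctness of the two greedy/reporting arguments, not the mechanical $\bigO{1}$ evaluations. For $\neighbor{}(v)$ I must show the $\rmax_r$ recursion visits exactly the neighbours of $v$ and prunes precisely when the subtree maximum drops to or below $l_v$, so that only $\bigO{\degree{}(v)}$ nodes are ever touched. For $\spath{}(u,v)$ I must prove that always jumping to the reachable interval with the farthest right endpoint produces a genuinely shortest path in an interval graph; this is the classical greedy-reachability property established in~\cite{DBLP:journals/networks/ChenLSS98}, and the crux is to confirm that the implementation $\SUCC(I_u)=\rmax_r(1,\rank_0(S,r_u))$ indeed selects the neighbour of $u$ of maximum right endpoint. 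Once these two correctness facts are secured, the stated space and time bounds follow immediately from Lemmas~\ref{rankselect} and~\ref{rmq}.
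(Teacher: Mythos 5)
Your proposal is correct and follows essentially the same route as the paper: the same three-part encoding ($S$ with $\rank$/$\select$, the array $r$, and the Lemma~\ref{rmq}(1) index), the same $O(1)$ formulas for $\degree{}$ and $\adjacent{}$, the same threshold-reporting recursion over $\rmax_r$ for $\neighbor{}$, and the same $\SUCC$-based greedy from~\cite{DBLP:journals/networks/ChenLSS98} for $\spath{}$. Your one refinement --- storing each $r_i$ as its rank among the one-positions of $S$ so that the linear term stays at $(2+\epsilon)n$ --- is a legitimate tightening of a detail the paper leaves implicit, and it is consistent with the $\rmax$ structure since the remapping is order-preserving.
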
 

\section{Some graph algorithms on the succinct representation of interval graphs}\label{app:graph}
The above set of operations along with the representation essentially captures the entire role of the adjacency list/array representation of the underlying interval graph. Once we have such a representation of $I$, we can talk about executing various algorithms on $I$. Here we are interested in the following set of algorithms.
\\\\
\noindent\textbf {Depth-first search ({\sf DFS}) and Breath-first search ({\sf BFS}).}
{\sf DFS} and {\sf BFS} are the two most widely known and popular graph search methods because of their versatile usage as the backbone of so many other powerful and important graph algorithms. 
In what follows, we show that essentially the vertices sorted by its ascending order of the labels i.e., $1, \dots, n$ gives both {\sf DFS} and {\sf BFS} vertex ordering of the graph $G$. Note that there may be more than one valid {\sf DFS} or {\sf BFS} ordering on $G$, but here we are interested in any of those valid and correct orderings. Moreover along the lines of recent papers~\cite{BanerjeeC0S18,ChakrabortyRS17,Chakraborty00S18,ChakrabortySR19}, here we are interested only in the ordering of the vertices in {\sf DFS} and {\sf BFS} traversals i.e., the order in which the vertices are visited for the first time during the {\sf DFS/BFS} traversal of the input graph $G$, not in actually reporting the final {\sf DFS/BFS} tree. Towards this, we show the following,

\begin{theorem} \label{dfs}
	Given an interval graph $G$ with $n$ vertices, suppose we label the vertices of $G$ with $\{1, \dots, n\}$ suvh that for any $a, b \in G$, we have $a < b$ if and only if $l_a < l_b$. Then the ascending order from $1$ to $n$ gives a valid {\sf DFS} and {\sf BFS} ordering of $G$.
\end{theorem}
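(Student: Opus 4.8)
The plan is to isolate one structural property of the left-endpoint labelling and then feed it into inductive descriptions of the two searches. \textbf{Structural core.} First I would record the adjacency rule in this labelling: for $u<v$ we have $l_u<l_v\le r_v$, so $I_u$ and $I_v$ intersect iff $l_v\le r_u$, i.e.\ (endpoints being distinct) iff $l_v<r_u$. The key consequence is a \emph{contiguity} property: there is no triple $u<w<v$ with $u,v$ adjacent but $u,w$ non-adjacent, since $u\sim v$ forces $r_u>l_v>l_w$, hence $r_u>l_w$ and $u\sim w$. Equivalently, the later neighbours $\{w>u : u\sim w\}$ of any vertex form a block of consecutive labels. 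This single fact is the conceptual crux; after it, both search validities are routine (it is exactly why the forbidden patterns in the standard forbidden-triple characterizations of search orders never occur).

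\textbf{Reduction to the connected case.} I would next note that each connected component occupies a contiguous range of labels: if $u<v$ lie in one component then every $w$ with $u<w<v$ does too, because the intervals of a connected interval graph have connected union $[a,b]$, and $l_w\in(a,b)$ forces $I_w$ to meet that union. Since both $\textsf{DFS}$ and $\textsf{BFS}$ restart at the smallest unvisited label, the components are processed as consecutive blocks, so it suffices to treat a single connected $G$ with source $1$.

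\textbf{BFS.} I would show every distance class is a prefix. With $d(v)$ the distance from $1$ and $L_{\le \ell}=\{v: d(v)\le \ell\}$, the base case is $L_{\le 0}=\{1\}$; if $L_{\le \ell}=\{1,\dots,m\}$ then its outside neighbours are exactly the $w>m$ with $l_w<\max_{j\le m} r_j$, again a consecutive block by monotonicity of $l_w$, so $L_{\le \ell+1}$ is a prefix. Hence $d$ is non-decreasing in the label order and the labels are listed level by level. To confirm that $1,\dots,n$ is realizable by the FIFO queue I would verify that the parent map $p(v)=\min\{q: d(q)=d(v)-1,\ q\sim v\}$ is monotone: for $u<v$ in the same level, $p(v)$ lies in the previous level so $p(v)<u$, and $p(v)\sim v$ gives $r_{p(v)}>l_v>l_u$, whence $p(v)\sim u$ and $p(u)\le p(v)$. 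Monotone parents make processing each level in increasing-label order an admissible dequeue order, so $\textsf{BFS}$ visits $1,\dots,n$.

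\textbf{DFS and the main obstacle.} For $\textsf{DFS}$ from $1$ I would run the search that always descends to the smallest unvisited neighbour and prove by induction that after $\{1,\dots,i\}$ have been visited the next vertex is $i+1$. The subtlety here is that consecutive labels need not be adjacent, so the search may have to backtrack; this is the step I expect to be the main obstacle. I would resolve it by showing that whenever a stack vertex $s\le i$ still has an unvisited neighbour $j$, then $s\sim i+1$: indeed $s<j$ gives $r_s>l_j\ge l_{i+1}$, so $r_s>l_{i+1}$ and $s\sim i+1$. Since $i+1$ is the smallest unvisited label, the deepest stack vertex carrying any unvisited neighbour is adjacent to $i+1$ and selects it, giving visit order $1,\dots,n$ and completing the proof for both searches.
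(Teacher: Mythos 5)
Your proof is correct and follows essentially the same route as the paper's: an induction on the number of visited vertices, driven by the observation that a visited vertex adjacent to a later-labelled vertex must, by the left-endpoint ordering, be adjacent to every intermediate vertex as well (your ``contiguity'' lemma is exactly the contradiction the paper invokes for DFS). You are in fact more complete than the paper, which only sketches the DFS case and dismisses BFS with ``a similar argument,'' whereas you also supply the reduction to connected components and the full level-prefix and parent-monotonicity argument needed to realize $1,\dots,n$ as a FIFO order.
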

\begin{proof}
	We only consider the {\sf DFS} traversal in the proof (the case of {\sf BFS} traversal can be proved using a similar argument). We prove by induction on the number of visited vertices. Since we can start from an arbitrary vertex in $G$, the theorem statement holds with starting the traversal with the vertex 1. Next, suppose that we already visited the vertices $1 \dots i$ with $i < n-1$ (the case $i = n-1$ is trivial) and for every valid {\sf DFS} traversal, there exists a vertex $i' > i+1$ which is visited prior to $i+1$. This implies that there exists at least one vertex $v \in \{1, \dots, i\}$ such that $v$ is adjacent to $i'$ but not $i+1$, contradicting the fact that $l_v  < l_{i+1} < l_{i'}$. Therefore there exists a valid {\sf DFS} traversal which visits the vertex $i+1$ after visiting the vertex $i$. 
	\qed
\end{proof}

\noindent\textbf{Perfect Elimination Ordering ({\sf PEO}).}
{\sf PEO} of a graph $G$, if it exists, is defined as an ordering of the vertices of $G$ such that, for each vertex $v$, $v$ and the neighbors of $v$ that occur before $v$ in the order form a clique~\cite{Golumbic}. If we order the vertices corresponding to the intervals by sorting based on their left endpoints, then the resulting vertex order is a {\sf PEO}, as the predecessor set of every vertex forms a clique. Thus, from our representation it is trivial to generate a {\sf PEO} of the given interval graph.
\\\\
\noindent\textbf{Maximum Independent Set ({\sf MIS}) and Minimum Vertex Cover ({\sf MVC}).}
To compute an {\sf MIS}, we simulate the greedy algorithm of~\cite{DBLP:journals/networks/GuptaLL82} which works as follows. Initialize the sets $E$ and $M$ to $\emptyset$. We first find the vertex $i$ such that $r_{i}$ is the leftmost among all  the right endpoints of the intervals in $I - E$. If such an $i$ exists, we add $i$ to $M$ and add $E = E \cup I'$ where $I' \subseteq I$ is the set of all intervals whose corresponding vertices are adjacent to $i$. We repeat this procedure until no such vertex $i$ exists, and return $M$. Also {\sf MVC} can be computed from {\sf MIS} by returning the complement of {\sf MIS}, in $O(n)$ time. 
(For the graph in Figure~\ref{figure:interval}, {\sf MIS} = $\{2, 5, 9\}$ and {\sf MVC} = $\{1, 3, 4, 6, 7, 8\}$.)

Now we show how the algorithm can be implemented in time linear in the size of the input, with our representation of $G$. 
We first initialize the set $M$ to $\emptyset$ and compute $i = \rmin{}(1, n)$ (which returns the interval with the smallest right end point among all the intervals), and add vertex $i$ to $M$. Then the greedy algorithm picks the next interval with the smallest right end point in the range $[\rank_0(S, r_i)+1, n]$ of the sequence $r$.
In general, suppose $M = \{m_1, m_2 \dots m_k\}$ and $m_k$ is the last vertex added to $M$. 
Then we compute $m_{k+1} = \rmin{}(\rank_0(S, r_{m_k})+1 ,n)$, and add $m_{k+1}$ to $M$, if it exists. Thus, we can compute {\sf MIS} in time linear in the size of {\sf MIS}.\\


\noindent\textbf{Computing a Maximum Clique.}
In order to find a maximum clique in $G$, we define a sequence $D = d_1, \dots, d_{2n}$ of length $2n$ where
(i) $d_1 = 1$, and (ii) for $ 1 < i \le 2n$, $d_i = d_{i-1}+1$ if $s_i = 0$ and $d_i = d_{i-1}-1$ otherwise
(for example, for the interval graph of Figure~\ref{figure:interval}, $D = 1~2~3~4~3~2~3~2~1~2~3~2~3~4~3~2~1~0$).
From the definition of $d_i$, if $s_i = 0$, there are exactly $d_i$ vertices in $G$ such that all corresponding intervals of these vertices have left endpoints at most $i$ and right endpoints larger then $i$. Thus all such $d_i$ vertices form a clique. This gives an algorithm for computing a maximum clique in $G$ as follows.
While constructing the sequence $D$ in $O(n)$ time, we maintain the index $k$ such that $d_k$ is a largest value in $D$. 
We then scan all the intervals and return those intervals whose left end point is at most $k$ and right end point is larger than $k$.
Therefore we can compute the maximum clique in $G$ in $O(n)$ time in total.
\\\\
\noindent\textbf{Computing a Proper Coloring.}
It is well-known that the greedy algorithm on $G$ yields the optimal proper coloring if we process the vertices of $G$ in the order of their corresponding intervals' left endpoints~\cite{Golumbic}. Thus, we simply implement this greedy coloring on $G$ from the vertex $1$ to $n$ as follows. We first maintain $n$ values $c_1, \dots, c_n$ such that for $1 \le i \le n$, $c_i \le \degree{}(i)$ stores the color of vertex $i$. Since each $c_i$ can be stored using $O(\log({\degree{}(i)}))$ bits, we can maintain all $c_i$'s using $\sum_{i = 1}^{n} O(\log({\degree{}(i)})) = O(n\log(m/n))$ bits in total by storing it as a sequence of variable length codes ($m$ denotes the number of edges in $G$). 
To access any element of this sequence in $O(1)$ time, we store an another bit vector of size $O(n\log(m/n))$ bits, which stores a $1$ at the starting position of each code (i.e., each vertex's color), and $0$ in all other positions; and store auxiliary data structure to support $\select$ queries on it.
Now, initialize all $c_1, \dots, c_n$ to $0$ and scan the vertices from $1$ to $n$. 
While we visit the vertex $i$, we perform the $\neighbor{}(i)$ query and choose the minimum color in 
$\{1 \dots \degree{}(i)\} - \{c_v | v \in \neighbor{}(i)\}$. 
Since we use $O(\degree(i))$ time for each $\neighbor{}(i)$ query to assign the color of $i$, 
we can assign the color of all vertices in $G$ in $O(n+m)$ time, using $O(n\log(m/n))$ extra bits of space.  

Another alternative way to implement the greedy coloring on $G$ is to use a priority queue.
In this case, we first compute $\chi(G)$, which is a chromatic number of $G$. Since $G$ is an interval graph, we can compute $\chi(G)$ in $O(n)$ time on our representation by computing the size of the maximum clique of $G$. 
Now we initialize $c_1, \dots, c_n$ to $0$  and insert $1, \dots, \chi(G)$ to the priority queue $PQ$, 
and scanning $S$ from left to right.
Suppose we currently access $s_i$ which corresponds to $I_j$ (we can compute the index $j$ in $O(1)$ time).
If $s_i = 0$, we assign the minimum element of $PQ$ to $c_j$, and delete $c_j$ from $PQ$. Otherwise, we insert $c_j$ to $PQ$.
Note that we exactly perform $2n$ insert operations and $n$ delete operations on $PQ$. Therefore we can compute a proper coloring of $G$
in $O(n\log \log \chi(G))$ time using $O(n \log n)$ bits of space, using the integer priority queue structure of~\cite{DBLP:journals/jcss/Thorup04}.
Note that these two solutions use $\Omega(n)$ bits of space, With $O(n)$ bits, we cannot store the colors of all the vertices simultaneously (unless the graph is sparse), and this poses a challenge for the greedy algorithm. We leave open the problem to find a proper coloring of interval graphs using extra $O(n)$ bits.
\section{Representation of some related families of interval graphs}\label{extend}
In this section, we propose space-efficient representations for proper interval graphs, $k$-proper and $k$-improper interval graphs, and circular arc graphs. Since these graphs are restrictions or extensions (i.e., sub/super-classes) of interval graphs, we can represent them by modifying the representation in Section~\ref{represent} (to make the representation asymptotically optimal in terms of space). We also show that navigation queries on these graph classes can be answered efficiently with the modified representation. 

\subsection{Proper interval graphs}
An interval graph $G$ is \textit{proper} if there exists an interval representation of $G$ 
such that for any two vertices $u, v \in G$, $I_u \not\subset I_v$ and $I_v \not\subset I_u$
(let such interval representation of $G$ be \textit{proper representation} of $G$).
Also it is known that proper interval graphs are equivalent to the \textit{unit interval graphs},
which have an interval representation such that every interval has the same length~\cite{Roberts}. 

Now we consider how to represent a proper interval graph $G$ with $n$ vertices while supporting navigational queries efficiently on $G$. 
We first obtain an interval representation of the graph $G$ where the intervals satisfy the property of proper interval graph.
We then assign labels to vertices of $G$ based on the sorted order left end points of their corresponding intervals, as described in Section~\ref{represent}.
Let $S$ be the bit sequence obtained from this representation, as defined in Section~\ref{represent}.
Then by the definition of $G$, there are no two vertices $u, v \in G$ with $l_u < l_v$ and $r_u > r_v$ (if so, $I_v \subset I_u$).
Thus by the Lemma~\ref{rankselect}, for any vertex $i \in G$ 
we can compute $l_i$ and $r_i$ in $O(1)$ time by $\select_0{}(S, i)$ and $\select_1{}(S, i)$ respectively using $2n+o(n)$ bits.
Also note that $r$ is strictly increasing sequence when $G$ is a proper interval graph, and hence one can support the $\rmax$ queries on $r = r_1 \dots r_n$ in $O(1)$ time without maintaining any data structure, by simply returning the rightmost position of the query range.
Thus, we obtain the following theorem. 

\begin{theorem}\label{proper}
Given a proper interval graph or unit interval graph $G$ with $n$ vertices, there exists a $2n+o(n)$-bit representation of $G$ which answers $\degree{}(v)$ and $\adjacent{}(u, v)$ queries in $O(1)$ time, $\neighbor{}(v)$ queries in $O(\degree{}(v))$ time, and $\spath{}(u, v)$ queries in $O(|\spath{}(u,v)|)$ time, for any vertices $u, v \in G$.
\end{theorem}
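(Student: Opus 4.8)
The plan is to reuse the entire machinery of Section~\ref{represent} and Section~\ref{navi}, but to exploit the defining property of proper interval graphs in order to drop the explicit right-endpoint array $r = r_1 \dots r_n$, which was the only term contributing $n\log n$ bits to the general bound. First I would fix a proper interval representation of $G$ (one with no containment among intervals) and label the vertices $1, \dots, n$ by increasing left endpoint, exactly as before, and form the same length-$2n$ bit sequence $S$ that marks left endpoints by $0$ and right endpoints by $1$, equipped with the $\rank$/$\select$ structure of Lemma~\ref{rankselect} occupying $2n + o(n)$ bits.

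The key structural step is the observation that, in a proper representation, no two vertices $u, v$ can satisfy $l_u < l_v$ and $r_u > r_v$, since that would force $I_v \subset I_u$; hence the sorted order of the right endpoints coincides with the sorted order of the left endpoints. Consequently vertex $i$ is simultaneously the $i$-th interval by left endpoint and the $i$-th interval by right endpoint, so both of its endpoints are recoverable directly from $S$ as $l_i = \select_0(S, i)$ and $r_i = \select_1(S, i)$ in $O(1)$ time. This is precisely what lets me discard the array $r$ together with its associated $\rmax{}$/$\rmin{}$ structure, bringing the total space down to $2n + o(n)$ bits.

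With both endpoints available in $O(1)$ time, I would then verify that every query algorithm of Section~\ref{navi} goes through unchanged. Because $r_1 < r_2 < \cdots < r_n$ is now strictly increasing, any $\rmax{}_r$ (or $\rmin{}_r$) query over a range reduces to returning the rightmost (respectively leftmost) index of that range in $O(1)$ time with no auxiliary data, so the $\neighbor{}$ traversal still reports each neighbor in $O(1)$ time per element and the $\SUCC{}$ computation underlying $\spath{}$ still takes $O(1)$ time per step. The $\degree{}$ and $\adjacent{}$ queries depend only on $\rank$ values and endpoint comparisons, and are therefore answered in $O(1)$ time as before.

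I expect the only genuinely nontrivial point to be the structural claim that the left-endpoint order equals the right-endpoint order in a proper representation; once that is in place, eliminating the array $r$ and re-running the Section~\ref{navi} arguments is routine, and the stated time and space bounds follow immediately.
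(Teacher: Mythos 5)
Your proposal is correct and matches the paper's proof essentially verbatim: the same key observation that a proper representation forbids $l_u < l_v$ with $r_u > r_v$, hence $r_i = \select_1(S,i)$ and the array $r$ (being strictly increasing) needs no stored $\rmax$ structure, with all queries from Section~\ref{navi} carrying over unchanged. No gaps.
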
   

It is known that there are asymptotically $\frac{1}{8\kappa\sqrt{\pi}}n^{-3/2}4^n$ non-isomorphic unlabeled unit interval graphs with $n$ vertices,
for some constant $\kappa > 0$~\cite{Finch}, and hence $2n -O(\log n)$ bits is an information-theoretic lower bound on representing an arbitrary proper interval graph. Thus our representation in Theorem~\ref{proper} gives a succinct representation for proper interval graphs. 

\subsection{$k$-proper and $k$-improper interval graphs}
One can generalize the proper interval graph to the following two sub-classes of interval graphs.
An interval graph $G$ with $n$ vertices is a \textit{$k$-proper interval graph} (resp. \textit{$k$-improper interval graph}) if there exists an interval representation of $G$ such that for any vertex $v \in G$, $I_v$ {\emph is contained by} (resp., {\emph contains}) at most $k \le n$ intervals in $G$ other than $I_v$. We call such an interval representation of $G$ as the \textit{$k$-proper representation} (resp. \textit{$k$-improper representation}) of $G$. Note that every proper interval graph is both a 0-proper and a 0-improper graph. The graph in Figure~\ref{figure:interval} is a 2-proper, and a 3-improper graph.

Now we consider how to represent a $k$-proper interval graph $G$ with $n$ vertices and support navigation queries efficiently on $G$.
We first represent $G$ $k$-properly into $n$ intervals, and assign the labels to vertices of $G$ based on the sorted order of their left end points, as described in Section~\ref{represent}.
Same as the representation in Section~\ref{represent}, we first maintain the data structure for supporting $\rank{}$ and $\select{}$ queries on $S$ in $O(1)$ time, using $2n+o(n)$ bits in total. 
Also we maintain the $2n+o(n)$-bit data structure of Lemma~\ref{rmq} on $r = r_1, \dots, r_n$ for supporting $\rmax$ queries on $r$ in $O(1)$ time. 
Next, to access $r$ without using $n\log{n}$ bits, we define the sequence $T = t_1 \dots t_{2n}$ of size $2n$ over the alphabet $\{0, \dots , 2k+1\}$ such that $t_i = 2k'$ (resp. $t_i = 2k'+1$) if $s_i = 0$ (resp. $s_i = 1$) 
and its corresponding interval is contained by $k' \le k$ intervals in $I=\{I_1 \dots I_n\}$. 
Now for any $0 \le i \le k$, let $R_i \subset I$ be the set of all intervals such that 
for any $[a, b] \in R_i$, $t_a = 2i$ and $t_b = 2i+1$. It is easy to show that each $R_i$ corresponds to a proper interval graph.
For example the graph in Figure~\ref{figure:interval} is a 2-proper interval graph, and
$T = 0~2~0~2~3~1~0~3~1~0~2~1~2~4~3~5~3~1$, $R_0 = \{I_1, I_3, I_5, I_6\}$, $R_1 = \{I_2, I_4, I_7, I_8\}$, and $R_2 = \{I_9\}$.  
Using $2n\log {(2k+2)}+o(n\log{k}) = 2n\log {k}+2n+o(n\log{k})$ bits, we can maintain the data structure of Lemma~\ref{rankselect} on $T$ to support $\rank{}$ queries in $O(\log{\log{k}})$ time, and $\select{}$ queries in $O(1)$ time.
Then for any vertex $v \in G$, we can answer its corresponding interval $I_v =[l_v , r_v]$ in $O(\log \log{k})$ time by 
$l_v = \select_0(S, v)$ and $r_v = \select_{(t_{l_v}+1)}(T, \rank_{t_{l_v}}(T, l_v))$.

Note that we can represent $k$-improper interval graphs in same space with same query time as we did for $k$-proper interval graphs by changing the definition of $T$ to be  
$t_i = 2k'$ (resp. $t_i = 2k'+1$) if $s_i = 0$ (resp. $s_i = 1$) 
and its corresponding interval contains $k' \le k$ intervals in $\{I_1 \dots I_n\}$. Thus, we obtain the following theorem.

\begin{theorem}\label{kproper:upper}
	Given a $k$-(im)proper interval graph $G$ with $n$ vertices, there exists a $(2n\log {k}+6n+o(n\log{k}))$-bit representation of $G$ which answers $\degree{}(v)$ and $\adjacent{}(u, v)$ queries in $O(\log \log{k})$ time, $\neighbor{}(v)$ queries in $O(\log \log{k} \cdot \degree{}(v))$ time, and $\spath{}(u, v)$ queries in $O(\log \log{k} \cdot |\spath{}(u,v)|)$ time, for any vertices $u, v \in G$.
\end{theorem}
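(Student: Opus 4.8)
The plan is to follow the same template established for the plain interval graph representation in Section~\ref{represent}, modifying only the component that lets us access the right endpoints $r_v$ without paying $n\log n$ bits. The space budget comes from three pieces that I would assemble in order. First I would store the data structure of Lemma~\ref{rankselect} on the binary sequence $S$, giving $\rank$ and $\select$ on $S$ in $O(1)$ time using $2n+o(n)$ bits; this supports $\degree{}$ and $\adjacent{}$ exactly as before, since those queries only need $l_v = \select_0(S,v)$, the count $\rank_1(S,l_v)$, and the endpoint comparisons. Second I would build the $2n+o(n)$-bit $\rmax$ structure of Lemma~\ref{rmq}(2) on the sequence $r = r_1 \dots r_n$; note this version does \emph{not} store $r$ itself, which is precisely what keeps us inside the $2n\log k + 6n$ budget. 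Third I would store the Lemma~\ref{rankselect} data structure on the alphabet-$\{0,\dots,2k+1\}$ sequence $T$, costing $2n\log(2k+2)+o(n\log k) = 2n\log k + 2n + o(n\log k)$ bits and supporting $\rank$ on $T$ in $O(\log\log k)$ time and $\select$ on $T$ in $O(1)$ time. Summing the three pieces gives $2n\log k + 6n + o(n\log k)$ bits as claimed.

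The key new step is to verify that $r_v$ can be recovered from $T$ in $O(\log\log k)$ time via the stated formula $r_v = \select_{(t_{l_v}+1)}(T, \rank_{t_{l_v}}(T, l_v))$, and to justify why it is correct. The idea is that $T$ partitions the intervals by containment depth: each $R_i$ (intervals contained by exactly $i$ others) is itself a proper interval graph, so within a single depth class the left endpoints and right endpoints appear in the same relative order. Thus if $I_{l_v}$ is the $j$-th interval of depth $k'$ when reading left endpoints left to right — that is, $j = \rank_{2k'}(T, l_v)$ where $2k' = t_{l_v}$ — then its right endpoint is the $j$-th symbol $2k'+1$ in $T$, located by $\select_{2k'+1}(T, j)$. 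I would state and prove this ordering fact as the main substantive lemma inside the proof; it rests on the observation that two same-depth intervals cannot nest, and two disjoint same-depth intervals keep their order, so no crossing of the "opening/closing" symbols of a fixed depth class occurs. Once $I_v = [l_v, r_v]$ is available in $O(\log\log k)$ time, the three navigational queries inherit the interval-graph algorithms of Section~\ref{navi} verbatim, with every $O(1)$ endpoint computation replaced by an $O(\log\log k)$ one.

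For the query-time bookkeeping I would simply trace the costs through the Section~\ref{navi} procedures. The $\degree{}$ and $\adjacent{}$ queries need a constant number of endpoint computations and $\rank$/$\select$ calls, so each costs $O(\log\log k)$. The $\neighbor{}(v)$ query runs the same $\rmax$-driven recursive reporting on $r$, outputting one vertex per $O(1)$ $\rmax$ call but paying $O(\log\log k)$ to materialize each reported right endpoint, for $O(\log\log k \cdot \degree{}(v))$ total. The $\spath{}(u,v)$ query performs $O(|\spath{}(u,v)|)$ iterations, each computing one $\SUCC$ via a constant number of endpoint and $\rmax$ operations, giving $O(\log\log k \cdot |\spath{}(u,v)|)$. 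Finally I would remark that the $k$-improper case is handled by the identical construction with $T$ redefined so that $t_i$ records the number of intervals an interval \emph{contains} rather than is contained by, as noted just before the theorem; the proper-interval decomposition argument and all query analyses carry over unchanged.

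The main obstacle I anticipate is the correctness of the right-endpoint recovery formula — specifically, pinning down rigorously that each depth class $R_i$ behaves like a proper interval graph so that the $i$-th opening symbol and the $i$-th closing symbol of a fixed depth correspond to the same interval. One must check that the containment depth is well-defined and that no two intervals of equal depth nest (if $I_a \subset I_b$ then $I_a$'s depth strictly exceeds $I_b$'s, so equal depth rules out nesting), which is exactly the proper-interval property restricted to each class. Everything downstream is routine once this structural fact is secured.
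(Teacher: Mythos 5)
Your proposal is correct and follows essentially the same route as the paper: the same three components ($S$ with rank/select, the non-storing $\rmax$ structure of Lemma~\ref{rmq}(2) on $r$, and the alphabet-$\{0,\dots,2k+1\}$ depth sequence $T$), the same space accounting, and the same recovery formula $r_v = \select_{(t_{l_v}+1)}(T, \rank_{t_{l_v}}(T, l_v))$ justified by the observation that each depth class is nesting-free and hence order-preserving between left and right endpoints. The paper states this structural fact without proof ("it is easy to show that each $R_i$ corresponds to a proper interval graph"), so your explicit argument that equal-depth intervals cannot nest is a welcome elaboration rather than a deviation.
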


\subsection{Circular-arc graphs}\label{sec:circle}
In this section, we propose a succinct representation for circular-arc graphs,
and show how to support navigation queries efficiently on the representation. Note that,
a \textit{circular-arc graph} $G$ is a graph whose vertices can be assigned to arcs on a circle so that two vertices are adjacent in $G$ if and only if their assigned arcs intersect. Now suppose that $G$ is represented by the circle $C$ together with $n$ arcs of $C$.
For an arc, we define its start point to be the unique point on it such that the arc continues from that point in the clock-wise direction 
but stops in the anti-clockwise direction; and similarly define its end point to be the unique point on it such that the arc stops in the 
clockwise direction but continues in the anti-clockwise direction.
As in the case of interval graphs, we assume, without loss of generality, that all the start and end points of all the arcs are distinct. 
We label the vertices of $G$ with the integers form $1$ to $n$ as described below. 
We first select an arbitrary arc, and label the vertex (and the arc) corresponding to this arc by $1$. 
We then traverse the circle from 
the starting point of that arc in the clockwise direction, and label the remaining vertices and arcs in the order in which 
their starting points are encountered during the traversal, and finish the traversal when we return to the starting point of the first arc.
We also map all the start and end points of all arcs, in the order in which they are encountered in the above traversal, into the range $[1, \dots, 2n]$
(since the start and end points of all the $n$ arcs are distinct). Note that all these steps can be performed in $O(n+m)$ time, where $m$ is the number of edges in $G$~\cite{DBLP:journals/algorithmica/McConnell03}. With the above defined labeling of the arcs, and the numbering
of their start and end points, let $l_i$ and $r_i$ start and end points of the arc labeled $i$, for $1 \le i \le n$.
Now the arcs can be thought of as two types of intervals in the range $[1, \dots, 2n]$;
we call an interval (and its corresponding vertex) $i$ as {\em normal} if $l_i < r_i$ (i.e., we traverse $l_i$ prior to $r_i$), and {\em reversed} otherwise.
A normal vertex $i$ corresponds to the interval $[l_i, r_i]$, while a reversed vertex $i$ actually corresponds to the union of the two intervals $[1, \dots, r_i]$ and $[l_i, \dots, 2n]$.
See Figure~\ref{figure:circular} for an example; vertex 4 and 7 are reversed, while the others are normal.
%
Our representation of $G$ consists of the following substructures. 
\begin{figure}[htp]
	\begin{center}
		\includegraphics[scale=0.4]{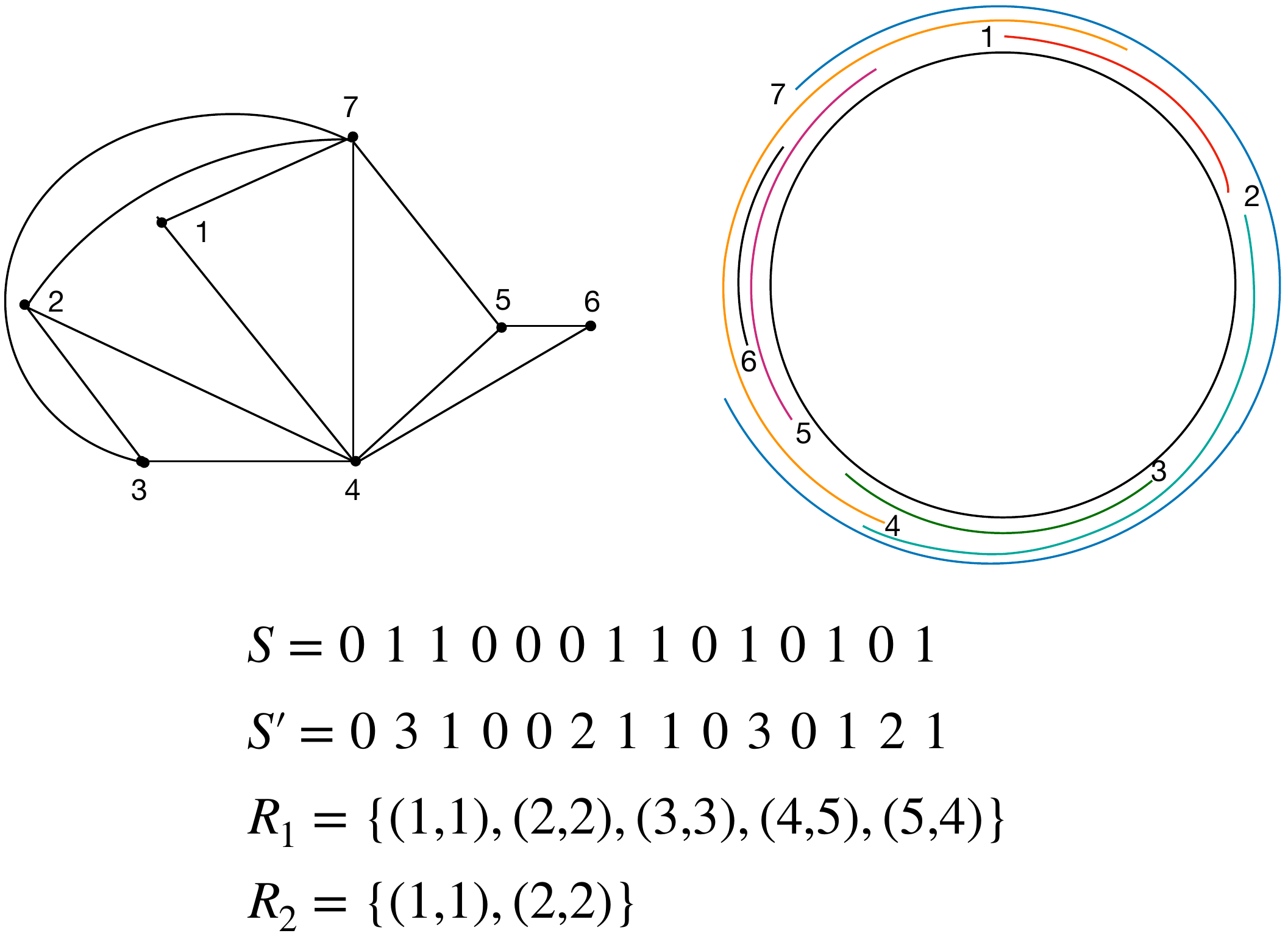}
	\end{center}
	\caption{Example of the circular graph and its representation.}
	\label{figure:circular}
\end{figure}
\begin{enumerate}
\item Define a binary sequence $S = s_1, \dots, s_{2n}$ of length $2n$ such that for $1 \le i \le 2n$, 
$s_i = 0$ (resp. $s_i = 1$)  if $i$-th end point encountered during the traversal of $C$ is in $\{l_1, \dots, l_n\}$ (resp. $\{r_1, \dots, r_n\}$).
Now, construct a sequence $S' = s'_1, \dots, s'_{2n}$ of size $2n$ over an alphabet $\{0,1,2,3\}$ such that for all $1 \le i \le 2n$, $s'_i = s_i + 2$ if the position $s_i$ corresponds to the start or end point of a reversed interval, and $s'_i = s_i$ otherwise (i.e., if $s_i$ corresponds to a normal interval).
We represent $S'$ using the structure of Lemma~\ref{rankselect}, using $4n + o(n)$ bits, so that we can answer $\rank$
and $\select$ queries on $S'$ in $O(1)$ time. In addition, we also store auxiliary structures (of $o(n)$ bits) on top of $S'$ to support $\rank$ and $\select$ queries on $S$ (without explicitly storing $S$ -- note that one can efficiently reconstruct any subsequence of $S$ from $S'$).

\item To store the interval end points efficiently, we introduce two 2-dimensional grids of points, $R_1$ and $R_2$, defined as follows.
Suppose there are $q \le n$ normal vertices in $G$ and $n-q$ reversed vertices.
Then let $R_1$ be a set of $q$ points on the 2-dimensional grid $[1, q] \times [1, q]$ 
which consist of $(\rank_0(S', l_i), \rank_1(S', r_i))$, for all $1 \le i \le n$  with $l_i < r_i$. 
Similarly let $R_2$ be a set of $n-q$ points on the 2-dimensional grid $[1, n-q] \times [1, n-q]$ 
which consist of $(\rank_2(S', l_i), \rank_3(S', r_i))$, for all $1 \le i \le n$ with $r_i < l_i$.
Given a set of points $R$ on 2-dimensional grid, we define the following queries (for any rectangular range $A$):
\begin{itemize}
	\item $Y(R, x)$: returns $y$ with $(x, y) \in R$.
	\item $count(R, A)$: returns the number of points in $R$ within the range $A$. 
\end{itemize}
We represent $R_1$ and $R_2$ using $n\log{n} + o(n\log{n})$ bits in total, such that $Y$ and $count$ queries can be supported in $O(\log{n} / \log\log{n})$ time~\cite{DBLP:conf/wads/BoseHMM09}. 
Thus when the vertex $i$ is given, we can compute $l_i$ and $r_i$ in $O(\log{n}/\log\log{n})$ time, using the equations $l_i = \select_0{}(S, i)$, and $r_i  = \select_1{}(S',Y(R_1, \rank_0(S',l_i)))$ if $S'_{l_i} = 0$ (i.e., if $l_i$ is the left end point of a normal interval), and $r'_i  = \select_3{}(S', Y(R_2, \rank_2(S',l'_i)))$ otherwise (i.e., if $l_i$ is the left end point of a reversed interval). 
\item Finally, let $r' = r'_1, \dots, r'_q$ be a sequence such that for $1 \le i \le q$, 
$r'_i = r_{j_i}$ with $j_i = \select_{0}(S', i)$. 
Similarly, let $r'' = r''_1, \dots, r''_{n-q}$ be a sequence such that for $1 \le i \le n-q$, 
$r''_i = r_{j_i}$ with $j_i = \select_{2}(S', i)$.
For example, for the circular-arc graph of Figure~\ref{figure:circular}, 
$r = 3~7~8~2~14~12~10$, $r' = 3~7~8~14~12$, and $r'' = 2~10$.
Then we maintain the data structure of Lemma~\ref{rmq} on $r'$ and $r''$, using a total of $2n+o(n)$ bits, to support $\rmax$ queries on each of them.    
Thus, the overall representation takes $n\log{n} + o(n\log{n})$ bits in total.
\end{enumerate}
One can show that this representation supports $\degree$, $\adjacent$, $\neighbor$ and $\spath$ queries efficiently,
to prove the following theorem.

\begin{theorem}\label{thm:circular}
Given a circular arc graph $G$ with $n$ vertices, there exists an $(n\log{n} + o(n\log{n}))$-bit representation of $G$ which supports $\degree{}(v)$ and $\adjacent{}(u, v)$ queries in $O(\log{n}/\log\log{n})$ time, $\neighbor{}(v)$ queries in $O(\degree{}(v) \cdot \log{n}/ \log\log{n})$ time, and  $\spath{}(u, v)$ queries in $O(|\spath(u,v)|  \log{n}/ \log\log{n})$ time for any two vertices $u, v \in G$.
\end{theorem}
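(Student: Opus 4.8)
The representation and its $n\log{n}+o(n\log{n})$-bit space bound are already fixed by substructures (1)--(3), so what remains is to show that each of the four queries runs within the claimed bound. The plan is to reduce every query to a constant number of $\rank$/$\select$ operations on $S'$ (each $O(1)$ by Lemma~\ref{rankselect}) together with a constant number of $Y$, $count$, or $\rmax$ operations on the grids $R_1,R_2$ and the sequences $r',r''$ (each $O(\log{n}/\log\log{n})$ by~\cite{DBLP:conf/wads/BoseHMM09} and Lemma~\ref{rmq}). The conceptual backbone throughout is a case analysis on whether each queried vertex is \emph{normal} or \emph{reversed}, a distinction that $S'$ encodes and that $\rank_2$/$\rank_3$ expose in $O(1)$ time.

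First I would establish the adjacency rule on the circle. Writing a normal arc as $[l_i,r_i]$ and a reversed arc as $[1,r_i]\cup[l_i,2n]$, I would verify three facts: (a) two normal arcs intersect iff their intervals do, i.e.\ iff neither $r_u<l_v$ nor $r_v<l_u$ holds; (b) a normal arc $u$ and a reversed arc $v$ intersect iff $l_u\le r_v$ or $r_u\ge l_v$; and (c) any two reversed arcs intersect, since both contain the wrap-around point between $2n$ and $1$. Given this, $\adjacent{}(u,v)$ reduces to computing $l_u,r_u,l_v,r_v$ via the $Y$/$\select$ formulas already spelled out in substructure~(2) (each in $O(\log{n}/\log\log{n})$ time), reading the two type bits, and performing an $O(1)$ test, which matches the claimed time.

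For $\degree{}(v)$ I would count non-neighbors, again by type. If $v=[l_v,r_v]$ is normal, a normal $u$ is a non-neighbor iff $r_u<l_v$ or $l_u>r_v$ (two mutually exclusive conditions, since a normal arc has $l_u<r_u$), while a reversed $u$ is a non-neighbor iff $r_u<l_v$ and $l_u>r_v$; if $v$ is reversed, no reversed vertex is a non-neighbor, and a normal $u$ is a non-neighbor iff $r_u<l_v$ and $l_u>r_v$. Each such condition becomes a rectangular constraint once the thresholds $l_v,r_v$ are converted to coordinates of the appropriate grid through $\rank_0,\rank_1,\rank_2,\rank_3$ on $S'$, so the total is obtained from a constant number of $count(R_1,\cdot)$ and $count(R_2,\cdot)$ queries and then $\degree{}(v)=n-1-(\text{non-neighbors})$. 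The grid queries are the only non-constant-time component, giving $O(\log{n}/\log\log{n})$.

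Finally, for $\neighbor{}(v)$ and $\spath{}(u,v)$ I would mimic the interval-graph constructions of Section~\ref{navi}, split by type. Reporting neighbors uses the $\rmax$ structures on $r'$ and $r''$ exactly as in the interval case to enumerate, in $O(1)$ amortized per candidate, the normal (resp.\ reversed) arcs whose right endpoint clears the relevant threshold; each reported candidate then costs one grid lookup to recover its vertex label, for $O(\degree{}(v)\cdot\log{n}/\log\log{n})$ total, and the reversed--reversed clique neighboring a reversed $v$ is enumerated directly via $\select$ on $S'$. For $\spath{}$ I would redefine $\SUCC{}$ so that from the current arc it greedily advances to the neighbor reaching farthest in the traversal direction, computed by an $\rmax$ query on $r'$/$r''$ together with grid lookups in $O(\log{n}/\log\log{n})$, and iterate, so each step of the path costs $O(\log{n}/\log\log{n})$. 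I expect the shortest-path query to be the main obstacle: unlike the line, the circle lets a path make progress in either rotational direction, so the greedy farthest-reach $\SUCC{}$ must be argued to remain optimal once arcs wrap around the cut point, and the care lies in defining $\SUCC{}$ and its termination/wrap-around handling correctly and proving the greedy choice yields a genuine shortest path. The degree count is routine by comparison, its only subtlety being getting the strict-versus-nonstrict rank conversions exactly right so that $v$ itself and the boundary arcs are neither double-counted nor missed.
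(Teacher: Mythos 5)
Your reduction of $\degree$, $\adjacent$ and $\neighbor$ to a constant number of $\rank$/$\select$ calls on $S'$ plus grid and $\rmax$ queries is sound and essentially the paper's argument; the only real difference is that you count non-neighbors for $\degree$ where the paper counts neighbors directly (splitting the reversed non-neighbors into three sub-cases and using one $count(R_2,\cdot)$ call), and your complementary-counting version is, if anything, cleaner. Note one small point: the paper charges $O(\log n/\log\log n)$ even to $\adjacent$, exactly as you do, because recovering $r_v$ from $l_v$ already requires a $Y$ query on the grid.

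The genuine gap is $\spath$, and you have correctly located it but not closed it. You leave open both how $\SUCC$ should behave at the wrap-around and how to handle the fact that a shortest path may go around the circle in either direction, and the theorem's $O(|\spath(u,v)|\cdot\log n/\log\log n)$ bound cannot be claimed without resolving both. The paper's resolution has two ingredients you are missing. First, $\SUCC(u)$ is defined to \emph{prefer a reversed neighbor} whenever one exists (choosing a suitable extremal one via an $\rmax$ query on $r''$), and only otherwise to take the farthest-reaching normal neighbor via $\rmax$ on $r'$ as in the interval case; this works because all reversed arcs contain the cut point between $2n$ and $1$ and hence form a clique, so the reversed vertices act as a single shortcut layer and the greedy never needs to reason about arcs straddling the cut beyond this preference rule. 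Second, the two rotational directions are not handled by making the greedy smarter: the paper simply runs the interval-style greedy procedure for $\spath(u,v)$ and for $\spath(v,u)$ \emph{in parallel} and returns whichever terminates first, which costs only a constant factor and guarantees that the better of the two directions is found. Without these two devices (or substitutes for them), your "greedily advance to the neighbor reaching farthest in the traversal direction" is not well-defined on the circle and its optimality is unproven, so the $\spath$ claim remains open in your write-up.
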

\begin{proof}
	Suppose we have the $n\log{n} + o(n\log{n})$-bit representation described in Section~\ref{sec:circle}.
Now we consider the following queries, and show how to support them efficiently, by extending the proof in Section~\ref{navi}.
\\\\
\noindent\textbf{$\degree{}\pmb{(v)}$ query.}
To answer $\degree{}(v)$ query, we count the number of vertices $u$ which adjacent to $v$ by considering $u$ into two cases as 
(i) $u$ is normal and
(ii) $u$ is reversed, and return the sum of them. 
Now we consider the two cases by the type of $v$ as follows.
\begin{itemize}
	\item \textbf{$\bm v$ is normal:} We can count the number of  vertices in (i) in $O(\log{n}/ \log\log{n})$ time by returning $\rank_0{}(S, r_v)-\rank_1{}(S, l_v)$, same as answering $\degree{}$ queries on interval graphs (see Section~\ref{navi}).
	Next, we count the vertices $u$ in (ii) by considering three cases as 1) $l_u < l_v$, 2) $r_v < r_u$, and 3) $l_v < l_u < r_v$ or $l_v < r_u < r_v$ separately and return the sum of them. 
	First, number of vertices in case 1) and 2) can be easily answered in $O(\log{n}/ \log\log{n})$ time by returning 
	$\rank_2{}(S', l'_v)$ and $\rank_3{}(S', r'_v)$ respectively. 
	To count the number of vertices in case 3), we first count the number of start and end points of reversed intervals between $l_v$ and $r_v$  by returning $(\rank_2{}(S', r_v)- \rank_2{}(S', l_v))+(\rank_3{}(S', r_v)- \rank_3{}(S', l_v))$. After that we subtract the number of vertices whose both start and end points exist between $l_v$ and $r_v$, which is $count(R_2, R)$ where $R = [\rank_2{}(S', l_v), \rank_2{}(S', r_v)] \times [\rank_3{}(S', l_v), \rank_3{}(S',r_v)]$. Thus we can count the number of vertices in this case in $O(\log{n}/ \log\log{n})$ time.
	%
	%
	%
	\item \textbf{$\bm v$ is reversed:} We count the number of vertices $u$ in case (i) by considering three cases as 1) $r_u < r_v$, 2) $l_v < l_u$, and 3) $r_v < l_u < l_v$ or $r_v < r_u < l_v$ separately and return the sum of them. This can be answered in $O(\log{n}/ \log\log{n})$ time by the same argument as above. For counting the vertices in (ii), we simply return $\rank_2{}(S', 2n)-1$ since all the vertices corresponds to the reverse interval cross $l_1$ in $C$, i.e., all such vertices form a clique in $G$.	
\end{itemize}
\noindent\textbf{$\adjacent{}\pmb{(u,v)}$ query.} This can be answered in $O(\log{n}/ \log\log{n})$ time by checking $l_u$, $r_u$, $l_v$, and $r_v$.
\\\\
\noindent\textbf{$\neighbor{}(\pmb{v})$ query.} 
We only describe how to answer the vertices $u$ adjacent to $v$ when $v$ is normal. 
The case when $v$ is reversed can be handled similarly. 
First we can return the all normal vertices $u$ adjacent to $v$ in $O(\log{n}/ \log\log{n} \cdot \degree(v))$ time using the same argument in the \neighbor{} query on interval graphs (see Section~\ref{navi}).
Next, the set of reversed vertices $u$ adjacent to $v$, is a disjoint union of the following two sets:
1) the set $S_1$ of all vertices $u$ with $l_u < r_v$, and 
2) the set $S_2$ of all vertices $u$ with $l_v < r_u$. 
We can answer all the vertices in $S_1$ in $O(\degree(v))$ time by returning $\rank_0(S, \select_2{}(S', 1)), \dots , \rank_0(S, \select_2{}(S', \rank_2{}(S', r_v)))$, 
which takes $O(1)$ time per each element.  
Finally vertices in $S_2-S_1$ is equivalent to the the vertices $u$ in $\{\rank_0{}(S, r_v)+1, \dots, n\}$
with $l_v < r_u$. 
Using the data structure  $\rmax$ on $r''$ with a query range $[\rank_2{}(S, r_v)+1, \dots, n-q]$ on $r''$, 
these vertices  can be answered in $O(\log{n}/ \log\log{n})$ time per element by the same procedure to answer the $\neighbor{}$ queries on interval graphs.
Thus, we can answer $\neighbor(v)$ query in $O(\log{n}/ \log\log{n} \cdot \degree(v))$ time in total.   
\\\\
\noindent\textbf{$\spath{}(\pmb{u, v})$ query.}
We simulate the algorithm of~\cite{DBLP:journals/networks/ChenLSS98} with our representation of $G$. 
We first define $SUCC$ query on circular-arc graph $G$ and show how to answer the $SUCC(u)$ query in $O(\log{n}/\log\log{n})$ time.
For a set of vertices $V$ of $G$, let $V_1$ and $V_2$ be the set of normal and reversed vertices in $V$ respectively. 
Then for vertex $u \in V$, we can define $SUCC(u)$ as follows. 
\begin{itemize}
	\item If there exists a vertex in $V_2 \cap V_u$ where $V_u = \{u' | l_{u'} < r_u\}$, $SUCC(u)$ returns a vertex $u' \in V_2  \cap V_u$
	with the arc $u$ and $u'$ are intersect, and there is no vertex $u'' \in V_2  \cap V_u$ with the arc $u$ and $u''$ are intersect and $r_{u'} < r_{u''} < r_u$. Let this vertex be $u_1$. 
	\item Otherwise, $SUCC(u)$ returns a vertex $u' \in V_1$ with the arc $u$ and $u'$ are intersect, and there is no vertex $u'' \in V_1$ with the arc $u$ and $u''$ are intersect and $r_u < r_{u'} < r_{u''}$. Let this vertex be $u_2$. 
\end{itemize}
To answer $u_1$, we consider two cases as follows.
If $u \in V_1$, 
We can find $u_1$ by
returning $\rank_0{}(S, \select_2{}(S', v'))$ where $v' = \rmax_{r''}(1, \rank_2{}(S', r_u))$, which can be answered in $O(\log{n}/\log\log{n})$ time. 
Similarly if $u \in V_2$, we  can find $u_1$ in $O(\log{n}/\log\log{n})$ time by returning 
$\rank_0{}(S, \select_2{}(S', v'))$ where $v' = \rmax_{r''}(1, \rank_2{}(S', l_u))$.
Also we can find $u_2$ in  $O(\log{n}/\log\log{n})$ time by the same argument for answering $SUCC$ queries on interval graphs.

To answer the $\spath(u,v)$ query, we do a same procedure for answering $\spath{}(u, v)$ and $\spath{}(v, u)$ queries on interval graphs (with the $SUCC$ function defined on circular arc graphs) in parallel, and return one of them which completes the procedure earlier. 
Since we can answer $SUCC$ query in $\log{n}/\log\log{n}$ time, we can answer $\spath(u,v)$ query in 
$O(\log{n}/\log\log{n} \cdot |\spath{}(u, v)|)$ time. 
\qed
\end{proof}
It is easy to prove that we can answer $Y$ and $count$ queries on $R_1$ and $R_2$ in $O(\log{n})$ 
time with $n\log{n}+o(n \log{n})$ bits of space by maintaining the wavelet tree~\cite{DBLP:journals/jda/Navarro14}
on  $r'$ and $r''$, instead of maintaining the data structure of~\cite{DBLP:conf/wads/BoseHMM09} on $R_1$ and $R_2$. 
This gives a simple succinct representation of $G$ while using the same space and support $\degree$ and $\adjacent$ queries in $O(\log{n})$ time, $\neighbor$ queries in $O(\log{n}  \cdot \degree(v))$ time, and  $\spath{}(u,v)$ queries in $O(|\spath(u,v)|\log{n})$ time.

Also the difference in query time on interval graphs and circular-arc graphs comes from the fact that
(i) when $\mathcal{A}_v$ is given, we need to know the number of arcs which are fully contained in $\mathcal{A}_v$ on circular-arc graph to answer $\degree(v)$ query, and (ii) since we use the data structure of~\cite{DBLP:conf/wads/BoseHMM09}, we need $O(\log n / \log \log n)$ time to access any element in $r'$ and $r''$. Thus, the query time can be improved by maintaining an $n\log{n}+O(n/c)$-bit data structure on $r'$ and $r''$ to support $\rmax$ on them, instead of maintaining the data structure of \cite{DBLP:conf/wads/BoseHMM09} on $R_1$ and $R_2$. In this case, since we can access any elements in $r'$ and $r''$ in $O(1)$ time, we can support $\adjacent$, $\neighbor$, and $\spath{}$ queries in same time as 
interval graphs, using $n\log{n}+o(n\log{n})$ bits of space in total.
In addition, for every vertex $v \in G$, by storing $\degree(v)$ explicitly using $n\ceil{\log{n}}$ bits, we can support $\degree$, $\adjacent$, $\neighbor$, and $\spath{}$ queries in same time as interval graphs, using $2n\log{n}+o(n\log{n})$ bits of space in total. This gives us the following:

\begin{corollary}
\label{cor:circular}
Given a circular arc graph $G$ with $n$ vertices, there exists an $(n\log{n} + o(n\log{n}))$-bit representation of $G$ which supports $\adjacent{}(u, v)$ queries in $O(1)$ time, $\neighbor{}(v)$ queries in $O(\degree{}(v))$ time, and  $\spath{}(u, v)$ queries in $O(|\spath(u,v)| )$ time  for any two vertices $u, v \in G$. Also, using an additional $n \ceil{\log n}$ bits, we can support $\degree{}(v)$ queries in $O(1)$ time.
\end{corollary}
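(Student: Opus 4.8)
The plan is to begin from the $n\log{n}+o(n\log{n})$-bit representation of Section~\ref{sec:circle} that underlies Theorem~\ref{thm:circular}, and to eliminate the two sources of the $O(\log n/\log\log n)$ slowdown identified just above the corollary---the $O(\log n/\log\log n)$ access cost to $r'$ and $r''$ forced by the grid structure of~\cite{DBLP:conf/wads/BoseHMM09}, and the single $count$ call hidden inside $\degree$---treating them separately.

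First I would replace the two-dimensional point sets $R_1$ and $R_2$ by storing the sequences $r'=r'_1\dots r'_q$ and $r''=r''_1\dots r''_{n-q}$ explicitly, each augmented with the data structure of Lemma~\ref{rmq}(1) with a constant $c$. This costs $n\log n+O(n/c)=n\log n+o(n\log n)$ bits, and it lets me read any entry $r'_i$ or $r''_i$ in $O(1)$ time while answering $\rmax$ on either sequence in $O(1)$ time. In particular, given a vertex $v$ I can recover its endpoints in $O(1)$ time: $l_v=\select_0(S,v)$ exactly as before, and $r_v$ is simply $r'_{\rank_0(S',l_v)}$ when $v$ is normal (respectively $r''_{\rank_2(S',l_v)}$ when $v$ is reversed), so the $Y$-queries through the grid are no longer needed anywhere.

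Next I would revisit the three queries $\adjacent$, $\neighbor$, and $\spath$ exactly as analysed in the proof of Theorem~\ref{thm:circular}, and verify that each of them uses only $\rank$/$\select$ on $S$ and $S'$, constant-time access to $r'$ and $r''$, and $\rmax$ on $r'$ and $r''$; none of them invokes a $count$ query. Since every such primitive now runs in $O(1)$ time, the three queries run within the interval-graph bounds $O(1)$, $O(\degree(v))$, and $O(|\spath(u,v)|)$, and this part is supported using $n\log n+o(n\log n)$ bits, giving the first assertion of the corollary.

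Finally, the only query that genuinely relied on the grid is $\degree$: its sole remaining bottleneck was the one $count(R_2,R)$ call (case~3 when $v$ is normal) used to subtract the reversed arcs whose two endpoints both lie inside the arc of $v$. Rather than reintroduce a counting structure, I would precompute $\degree(v)$ for every $v$ once and store these values in an array of $n\ceil{\log n}$ bits, so that $\degree(v)$ becomes a single lookup in $O(1)$ time at the stated additive cost. The step I expect to require the most care is the verification in the previous paragraph: one must confirm that the enumeration and counting of \emph{reversed} neighbours inside $\neighbor$ and inside $\SUCC$ are carried out purely through $\rank$/$\select$ and $\rmax$ (as they are in the proof of Theorem~\ref{thm:circular}) and never implicitly through a grid $count$, so that discarding $R_1$ and $R_2$ leaves those two queries intact.
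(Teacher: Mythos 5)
Your proposal is correct and follows essentially the same route as the paper: the paper likewise discards the grid structures $R_1$ and $R_2$ in favour of storing $r'$ and $r''$ explicitly with the $O(n/c)$-bit $\rmax$ structure of Lemma~\ref{rmq}(1) (restoring $O(1)$ access and hence interval-graph query times for $\adjacent$, $\neighbor$, and $\spath$), and then stores $\degree(v)$ explicitly in $n\ceil{\log n}$ extra bits to avoid the lone $count$ call. Your added verification that only $\degree$ ever needed a $count$ query, and your explicit formulas for recovering $r_v$ from $r'$ and $r''$ via $\rank$ on $S'$, are correct and in fact spell out details the paper leaves implicit.
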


\section{Conclusion and Final Remarks}\label{conclusion}
We considered the problem of succinctly encoding an unlabeled interval graph with $n$ vertices so as to support adjacency, degree, neighborhood and shortest path queries. To this end, we designed a succinct data structure that can support these queries optimally. We also showed how one can implement various combinatorial algorithms in interval graphs using our succinct data structure in both time and space efficient manner. Extending these ideas, finally, we also showed succinct/compact data structures for multiple other variants of interval graphs. One interesting open problem is to find a lower bound on space for representing $k$-proper or improper graphs, to show succinctness or improve the space of our data structure of Theorem~\ref{kproper:upper}. Also for $k$-(im)proper and circular-arc graphs, the query times of our data structures are super constant while using succinct space, hence (probably) non-optimal and we leave them as open problems whether we can design succinct data structures for supporting these queries in constant time.

\bibliographystyle{plain}
\bibliography{ref}

\end{document}